\DeclareFontFamily{OT1}{pzc}{}
\DeclareFontShape{OT1}{pzc}{m}{it}{2 <-> pzcmi8t}{}
\DeclareFontShape{OT1}{pzc}{m}{it}{<-> [1.5] pzcmi8t}{}
\DeclareMathAlphabet{\mathpzc}{OT1}{pzc}{m}{it}
\definecolor{red}{rgb}{1,0,0}	%M added
\newtheorem{theorem}{Theorem}[section]
\newtheorem{lemma}[theorem]{Lemma}
\newtheorem{definition}[theorem]{Definition}
\newcommand{\beq}{\begin{equation}}
\newcommand{\eeq}{\end{equation}}
\newcommand{\bea}{\begin{eqnarray}}
\newcommand{\eea}{\end{eqnarray}}
\def\qed{\quad \vrule height6.5pt width6pt depth0pt} %Nifty end proof sign
\newcommand{\be}{\begin{equation}}
\newcommand{\ee}{\end{equation}}
\newcommand{\beqr}{\begin{eqnarray}}
\newcommand{\eeqr}{\end{eqnarray}}
\newcommand{\beqrx}{\begin{eqnarray*}}
\newcommand{\eeqrx}{\end{eqnarray*}}
\newcommand{\ba}{\left[ \begin{array}}
\newcommand{\ea}{\\ \end{array} \right]}
\newcommand{\bi}{\begin{itemize}}
\newcommand{\ei}{\end{itemize}}
\newcommand{\qd}{\hfill{\qed}}
\def\qed{{\ \vrule width 2.5mm height 2.5mm \smallskip}}
\def\xb{{\bf x}}
\def\sb{{\bf s}}
\def\yb{{\bf y}}
\def\zb{{\bf z}}
\def\vb{{\bf v}}
\def\db{{\bf d}}
\def\hb{{\bf h}}
\def\upsib{\boldsymbol{\upsilon}}
\def\delb{\boldsymbol{\delta}}
\def\Hb{{\bf H}}
\def\Ib{{\bf I}}
\def\Rbb{\mathbb{R}}
\def\Nc{\mathcal{N}}
\def\Ec{\mathcal{E}}
\def\SNR{\mbox{SNR}}
\def\oneb{\text{\boldmath $1$}}
\def\zerob{\text{\boldmath $0$}}
\begin{document}
%\section*{}
\title{Optimized Markov Chain Monte Carlo for Signal Detection in MIMO Systems: an Analysis of Stationary Distribution and Mixing Time}
%\title{Detection using Markov Chain Monte Carlo}
\iffalse{
\author{\IEEEauthorblockN{Morten Hansen\\}
\IEEEauthorblockA{Technical University of Denmark, Informatics and Mathematical Modelling,\\Build. 321, DK-2800 Lyngby, Denmark.\\
E-mail: mha@imm.dtu.dk\\}
\and
\IEEEauthorblockN{Babak Hassibi and Georgios Alexandros Dimakis and Weiyu Xu\\}
\IEEEauthorblockA{California Institute of Technology, Department of Electrical Engineering,\\Pasadena, CA 91125, USA.\\
E-mail: hassibi@systems.caltech.edu and adim@eecs.berkeley.edu and weiyu@caltech.edu}}

} \fi

\author{\IEEEauthorblockN{
Babak Hassibi \IEEEauthorrefmark{2}, Morten Hansen\IEEEauthorrefmark{1},  Alexandros Georgios Dimakis \IEEEauthorrefmark{3}, Haider Ali Jasim Alshamary \IEEEauthorrefmark{4}, and Weiyu Xu \IEEEauthorrefmark{4} }
%\IEEEauthorblockA{\IEEEauthorrefmark{1}University of Iowa, Department of Electrical and Computer Engineering,\\Iowa City, IA, USA\\Email: haider-alshamary@uiowa.edu}
%\IEEEauthorblockA{\IEEEauthorrefmark{3}University of Texas, Austin, Department of Electrical and Computer Engineering,\\Austin, TX, USA\\Email: adim@eecs.berkeley.edu}
%\IEEEauthorblockA{\IEEEauthorrefmark{5}Technical University of Denmark, Informatics and Mathematical Modelling,
%\\Build. 321, DK-2800 Lyngby, Denmark,\\ E-mail: mha@imm.dtu.dk}
%\IEEEauthorblockA{\IEEEauthorrefmark{2}California Institute of Technology, Department of Electrical Engineering,\\Pasadena, CA 91125, USA\\Email: hassibi@systems.caltech.edu}
%\IEEEauthorblockA{\IEEEauthorrefmark{4}University of Iowa, Department of Electrical and Computer Engineering,\\Iowa City, IA, USA\\Email: weiyu-xu@uiowa.edu}
%
%\author{\IEEEauthorblockN{Morten Hansen\IEEEauthorrefmark{1},
%Babak Hassibi $^{\dag}$, Georgios Alexandros Dimakis $^{\dag}$, and Weiyu Xu \IEEEauthorrefmark{2}} \\
%\IEEEauthorblockA{\IEEEauthorrefmark{1}Technical University of Denmark, Informatics and Mathematical Modelling,
%\\Build. 321, DK-2800 Lyngby, Denmark,\\ E-mail: mha@imm.dtu.dk} \\
%\IEEEauthorblockA{\IEEEauthorrefmark{2}California Institute of Technology, Department of Electrical Engineering,\\Pasadena, CA 91125, USA\\Email: hassibi@systems.caltech.edu, adim@eecs.berkeley.edu, and weiyu@caltech.edu}
%
\thanks{Part of this paper was presented in the IEEE Global Communications Conference 2009 \cite{Hassibi_Globecom}, and the 51st IEEE Conference on Decision and Control 2012  \cite{WeiyuCDC2012}.

H. Alshamary and W. Xu are with the Department of Electrical and Computer Engineering, the University of Iowa, Iowa City, IA, USA. Email: haider-alshamary@uiowa.edu and weiyu-xu@uiowa.edu.
A. Dimakis  is with the Department of Electrical and Computer Engineering, University of Texas, Austin, Austin, TX, USA. Email: dimakis@austin.utexas.edu.
M. Hansen is with  Renesas Mobile Corporation, Sluseholmen 1, 2450 Copenhagen SV, Denmark. E-mail: morten.hansen@renesasmobile.com.
B. Hassibi is with the Department of Electrical Engineering, California Institute of Technology, Pasadena, CA 91125, USA. Email: hassibi@systems.caltech.edu.
}
}
% make the title area
\maketitle

\begin{abstract}
%\boldmath
In this paper we introduce an optimized Markov Chain Monte Carlo (MCMC) technique for solving the integer least-squares (ILS) problems, which include Maximum Likelihood (ML) detection in Multiple-Input Multiple-Output (MIMO) systems. Two factors contribute to the speed of finding the optimal solution by the MCMC detector: the probability of the optimal solution in the stationary distribution, and the mixing time of the MCMC detector. Firstly, we compute the optimal value of the ``temperature" parameter, in the sense that the temperature has the desirable property that once the Markov chain has mixed to its stationary distribution, there is polynomially small probability ($1/\mbox{poly}(N)$, instead of exponentially small) of encountering the optimal solution. This temperature is shown to be at most $O(\sqrt{\SNR}/\ln(N))$ \footnote{In this paper, $\Omega(\cdot)$,  $\Theta(\cdot)$, and $O(\cdot)$ are the usual scaling notations as in computer science}, where $\SNR$ is the signal-to-noise ratio, and $N$ is the problem dimension. Secondly, we study the mixing time of the underlying Markov chain of the proposed MCMC detector. We find that, the mixing time of MCMC is closely related to whether there is a local minimum in the lattice structures of ILS problems. For some lattices without local minima, the mixing time of the Markov chain is independent of $\SNR$,  and grows polynomially in the problem dimension; for lattices with local minima, the mixing time grows unboundedly as $\SNR$ grows, when the temperature is set, as in conventional wisdom, to be the standard deviation of noises. Our results suggest that, to ensure fast mixing for a fixed dimension $N$, the temperature for MCMC should instead be set as $\Omega(\sqrt{\SNR})$ in general. Simulation results show that the optimized MCMC detector efficiently achieves approximately ML detection in  MIMO systems having a huge number of transmit and receive dimensions.
%It has been observed that even in cases were ML detection, including ML sphere decoder becomes infeasible, the MCMC detector can still offer a near-optimal solution using much less computations.
\end{abstract}

\section{Introduction}
\label{sec:Introduction}
 %One method to obtain the ML solution is Sphere Decoding (SD) \cite{Hochwald_Ten-Brink_03, Hassibi_1, Hassibi_2,Agrell_et_al_02, Damen_et_al}. Over a wide range of Signal-to-Noise Ratios ($\SNR$)s the average complexity of SD is significantly smaller than exhaustive search detectors, but in worst case the complexity is still exponential \cite{Ottersten_05}. Thus, in scenarios with poor $\SNR$ or in Multiple-Input Multiple-Output (MIMO) systems with huge transmit and receive dimensions, even SD can be infeasible. A way to overcome this problem is to use approximate Markov Chain Monte Carlo (MCMC) detectors instead, which asymptotically can provide the optimal solution, \cite{Robert_MC_Stat_methods_04, Haggstrom_02}. MCMC sampling (also known as Glauber dynamics) is one MCMC method, which is used for sampling from distributions of multiple dimensions. The MCMC detector has been proposed for detection purposes in wireless communication in among others \cite{Zhu_Farhang_Boroujeny_05, Wang_Poor_03} (see also the references therein). The scope of this paper is to describe and analyse a new way of solving the integer least-squares problem using MCMC. It will be shown that the method can be used for achieving a near-optimal and computationally efficient solution of the problem, even for systems having a huge dimension.
The problem of performing Maximum Likelihood (ML) decoding in digital communication has gained much attention over the years. These ML decoding problems often reduce to integer least-squares (ILS) problems, which aim to find an integer lattice point closest to received signals. In fact, the ILS problem is an NP-hard optimization problem appearing in many research areas, for example, communications, global navigation satellite systems, radar imaging, Monte Carlo second-moment estimation, bioinformatics, and lattice design \cite{Agrell_et_al_02, Borno}. A computationally efficient way of exactly solving the ILS problem is the sphere decoder (SD) \cite{Damen_et_al, Hochwald_Ten-Brink_03, Hassibi_1,Hassibi_2, Agrell_et_al_02}. It is known that for a moderate problem size and a suitable range of Signal-to-Noise Ratios ($\SNR$), SD has low computational complexity, which can be significantly smaller than an exhaustive search solver. But for a large problem size and fixed $\SNR$, the average computational complexity of SD is still exponential in the problem dimension \cite{Ottersten_05}. So for large problem sizes, (for example massive Multiple-Input Multiple-Output (MIMO) systems with many transmit and receive antennas \cite{LargeAntenna}\cite{TabuSearchLargeAntenna}), SD still has high computational complexity and is thus computationally infeasible. A way to overcome this problem is to use approximate Markov Chain Monte Carlo (MCMC) detectors instead, which can provide the optimal solution asymptotically  \cite{Robert_MC_Stat_methods_04, Haggstrom_02}.

Unlike SD, MCMC algorithms perform a random walk over the signal space in the hope of finding the optimal solution. Glauber dynamics is a popular MCMC method which performs the random walk according to the transition probability determined by the stationary distribution of a reversible Markov chain \cite{Haggstrom_02}\cite{Levin}. \cite{Wang_Poor_03, Zhu_Farhang_Boroujeny_05} proposed Glauber dynamics MCMC detectors for data detection in wireless communication (see also the references therein). These MCMC methods are able to provide the optimal solution if they are run for a sufficiently long time; and empirically MCMC methods are observed to provide near-optimal solutions in a reasonable amount of computational time even for large problem dimensions \cite{Wang_Poor_03, Zhu_Farhang_Boroujeny_05}. However, as observed in \cite{Zhu_Farhang_Boroujeny_05, Farhang_Boroujeny_06, ChenRong}, unlike sphere decoders , which performs well in high SNR regimes, MCMC detectors often suffer performance degradation in high SNR regimes. In fact, the MCMC detectors in the literature were proposed mostly as practical heuristic detectors for digital communications, and theoretical understandings of the performance and complexity of MCMC detectors for ILS problems are very limited. For example, the mixing time (convergence rate) of the underlying Markov chains of these MCMC detectors, namely how fast these Markov chains mix to its stationary distribution, is not explicitly known. For the MCMC detectors in the literature \cite{Wang_Poor_03, Zhu_Farhang_Boroujeny_05}, the conditional transition probabilities of their underlying Markov chains were directly determined by the posterior likelihood of signal sequences \cite{Wang_Poor_03, Zhu_Farhang_Boroujeny_05}. In other words, the standard deviation of channel noise was naturally applied as the ``temperature'' of these MCMC detectors \cite{Wang_Poor_03, Zhu_Farhang_Boroujeny_05}.  However, it was not clear whether this choice of temperature is optimal, and what effect it will have on the performance and complexity of MCMC detectors.
%
% as observed in \cite{Zhu_Farhang_Boroujeny_05, Farhang_Boroujeny_06, ChenRong}, these MCMC detectors often suffer performance degradation in high SNR regimes.  It is of great theoretical and practical interest to optimize the performance of these MCMC detectors in solving ILS problems.

%However, the understanding of the mixing time (or the convergence rate, namely how fast a Markov chain converges to the stationary distribution) of these MCMC methods is still limited\cite{Hassibi_Globecom, Farhang_Boroujeny_06, ChenRong}. % \cite{Hassibi_Globecom} gave a characterization of the MCMC temperature parameter such that the optimal solution can be found in polynomial time assuming stationary distribution has been reached.
%In this paper, we consider the approximate Markov Chain Monte Carlo algorithm for solving integer least-square problems, which can provide the optimal solution if MCMC is run for a long enough time  \cite{Robert_MC_Stat_methods_04, Haggstrom_02}.

Two factors contribute to the speed of finding the optimal solution by the MCMC detector: the probability of the optimal solution in the stationary distribution, and the mixing time of the underlying Markov chain for the MCMC detector. In fact, if the optimal solution has a high probability in the stationary distribution, the MCMC detector will very likely encounter the optimal solution when its underlying Markov chain has mixed to its stationary distribution. However, as we will see in this paper, increasing the probability in the stationary distribution of the optimal solution often (even though not always) results in a slow mixing of the underlying Markov chain, namely it takes long time for the Markov chain to reach its stationary distribution. How to balance the mixing time and the stationary distribution for best performance of MCMC detectors is the main subject of this paper.

Our main contributions in this paper are twofold: characterizing the stationary distribution, and bounding the mixing time of MCMC detectors. These results lead to an optimized MCMC detector for solving ILS problems. Firstly, we compute the optimal value of the ``temperature" parameter, in the sense that the temperature has the desirable property that once the Markov chain has mixed to its stationary distribution, there is polynomially (and not exponentially) small probability of encountering the optimal solution. This temperature is shown to be at most $O(\sqrt{\SNR}/\ln(N))$, where $\SNR$ is the signal-to-noise ratio, and $N$ is the problem dimension. Secondly, we study the mixing time of the underlying Markov chain of the proposed MCMC detector. We find that, the mixing time of MCMC is closely related to whether there is a local minimum in the lattice structures of ILS problems. For some lattices without local minima, the mixing time of the Markov chain is independent of $\SNR$,  and grows polynomially in the problem dimension; for lattices with local minima, the mixing time grows unboundedly as $\SNR$ grows, when the temperature is set, as in conventional wisdom, to be the standard deviation of noises. We also study the probability that there exist local minima in an ILS problem. For example, the probability of having local minima is $\frac{1}{3}-\frac{1}{\sqrt{5}}+\frac{2\arctan(\sqrt{\frac{5}{3}})}{\sqrt{5}\pi}$ for $2 \times 2$ Gaussian MIMO matrices. Simulation results indicate, when the system dimension $N\rightarrow \infty$, there seems to be always at least one local minimum with Gaussian MIMO matrices, but we do not have a rigorous proof of this phenomenon. Our theoretical and empirical results suggest that, to ensure fast mixing, for a fixed dimension $N$, the temperature for MCMC should instead be set as $\Omega(\sqrt{\SNR})$, contrary to conventional wisdom of using the standard deviation of channel noises \cite{Wang_Poor_03, Zhu_Farhang_Boroujeny_05}. Our simulation results show that the optimized MCMC detector efficiently achieves approximately ML detection in  MIMO systems having a huge number of transmit and receive dimensions.

We caution, however, that we have not been able to prove the scaling of the mixing time in terms of system dimension $N$ for integer least-squares problems. The question whether MCMC detectors mix in polynomial time over $N$ remains an open problem.

The paper is organized as follows. In Section \ref{sec:System_model} we present the system model that will be used throughout the paper. The MCMC methods and background knowledge on Markov chain mixing time  are described in Section \ref{sec:MCMC_sampling}. In Section \ref{sec:P_error} we analyze the probability of error for the ML detector. Section \ref{sec:comp_opt_alpha} treats the optimal selection of the temperature parameter $\alpha$. Section \ref{sec:mixing_time_ortho}, \ref{sec:mixing_time_local},\ref{sec:local_minimum} and \ref{sec:choice_alpha} derive bounds on the mixing time and discuss how to optimize MCMC parameters to ensure fast mixing. Simulation results are given in Section \ref{sec:sim_results}.
 %and the conclusions can be found in Section \ref{sec:Conclusion}. 
\section{System Model}
\label{sec:System_model}
We consider a real-valued block-fading MIMO antenna system, with $N$ transmit and $N$ receive dimensions, with know channel coefficients. The received signal $\yb \in \Rbb^N$ can be expressed as
\begin{equation}
\label{EQ:sig_model_matrix}
\yb = \sqrt{\frac{\SNR}{N}}{\Hb{\xb}} + \upsib \ ,
\end{equation}
where ${\xb} \in \Xi^{N}$ is the transmitted signal, and $\Xi$ denotes the constellation set. To simplify the derivations in the paper we will assume that $\Xi = \left\{\pm 1\right\}$. $\upsib \in \Rbb^{N}$ is the noise vector where each entry is Gaussian $\Nc \left(0,1\right)$ and independent identically distributed (i.i.d.), and $\Hb \in \Rbb^{N \times N}$ denotes the channel matrix with i.i.d. $\Nc \left(0,1\right)$ entries. (In general, $\Hb$ can be any matrix, however, for analysis purposes we will focus on $\Hb$ with  i.i.d. Gaussian elements.)
The signal-to-noise ratio is defined as
\begin{equation}
\label{EQ:SNR}
\begin{split}
\SNR &= \frac{\Ec \left\|\sqrt{\frac{\SNR}{N}}\Hb{\xb}\right\|^2}{\Ec \|\upsib\|^2} \ ,
\end{split}
\end{equation}
which is done in order to take into account the total transmit energy. For analysis purposes we will focus on the regime where $\SNR > 2\ln(N)$, in order to get the probability of error of the ML detector to go to zero. Without loss of generality, we will assume that the all-minus-one vector was transmitted, ${\xb} = -\oneb$. Therefore
\begin{equation}
\yb = \upsib-\sqrt{\frac{\SNR}{N}}\Hb\oneb \ .
\end{equation}

%\subsection{Optimization problem}
We are considering a minimization of the average error probability $P\left({\bf e} \right) \triangleq P\left(\hat{{\xb}} \neq {\xb} \right)$, which is obtained by performing Maximum Likelihood Sequence Detection (here simply referred to as ML detection) given by
\begin{equation}
\label{EQ:Original_minimization_problem}
 \hat{{\xb}} = \arg	\mathop {\text{min} }\limits_{{\xb} \in \Xi^{N}} \ \  \left\| \yb - \sqrt{\frac{\SNR}{N}}\Hb {\xb} \right\|^2 \ .
\end{equation}

\section{MCMC Detector}
\label{sec:MCMC_sampling}

One way of solving the optimization problem given in \eqref{EQ:Original_minimization_problem} is by using Markov Chain Monte Carlo (MCMC) detectors, which asymptotically converge to the optimal solution if the detector follows a reversible Markov chain \cite{Mackay_03}.  We first describe our proposed MCMC detector based on reversible Markov chain, and then compare it with existing MCMC detectors in the literature.
%More specifically, the MCMC detector we investigate in this paper is the MCMC detector, which computes the conditional probability of each symbol in the constellation set at the $j$-th index in the estimated symbol vector. This conditional probability is obtained by keeping the $(N-1)$ other values in the estimated symbol vector fixed.

\subsection{Reversible MCMC Detector}
 \label{subsection:reversible}
 In this paper, we mainly focus on a MCMC detector which follows a reversible Markov chain and asymptotically converges to the stationary distribution \cite{Mackay_03}. Under the stationary distribution, the MCMC detector has a certain probability of visiting the optimal solution. So if run for a sufficiently long time, the MCMC detector will be able to find the optimal solution to \eqref{EQ:Original_minimization_problem}.

For this MIMO detection problem \eqref{EQ:Original_minimization_problem}, the MCMC detector starts with a certain $N$-dimensional feasible vector $\hat{\xb}^{(0)}$ among the set $\{-1,+1\}^{N}$ of cardinality $2^{N}$. Then the MCMC detector performs a random walk over $\{-1,+1\}^{N}$ based on the following reversible Markov chain. Assume that we are at time index $l$ and the current state of the Markov chain is $\hat{\xb}^{(l)} \in \{-1,+1\}^{N}$.  In the next step, the Markov chain picks one random position index $j$ uniformly out of $\{1,2, ..., N\}$, and keeps the symbols of $\hat{\xb}^{(l)}$ at other positions fixed. Then the MCMC detector computes the conditional probability of transferring to each constellation point at the $j$-th index. With the symbols at the $(N-1)$ other positions fixed, the probability that the $j$-th symbol adopts the value $\omega$, is given by
\begin{equation}
\label{Eq:Prob_of_symbol_MCMC}
% p\left( {\hat{\xb}_j^{(k)} = \omega \left|{ \hat{\xb}^{(k-1)}, \yb, \Hb}\right.} \right) =
p\left( {\hat{\xb}_j^{(l+1)} = \omega \left|{ \theta }\right.} \right) =  \frac{e^{-\frac{1}{2\alpha^2} \left\| \yb - \sqrt{\frac{\SNR}{N}} \Hb \hat{\xb}_{j \left|{\omega}\right.} \right\|^2   }}{ \sum\limits_{\hat{\xb}_{j \left|{\tilde{\omega}}\right.} \in \Xi}{e^{-\frac{1}{2\alpha^2} \left\| \yb - \sqrt{\frac{\SNR}{N}} \Hb \hat{\xb}_{j \left|{ \tilde{\omega} }\right.} \right\|^2 } }}  \ ,
\end{equation}
where $\hat{\xb}_{j \left|{\omega}\right. }^T \triangleq \left[\hat{\xb}_{1:j-1}^{(l)}, \omega, \hat{\xb}_{j+1:N}^{(l)} \right]^T$ and $\theta = \left\{ \hat{\xb}^{(l)}, j, \yb, \Hb \right\}$. So conditioned on the $j$-th position is chosen, the MCMC detector will with probability $p\left( {\hat{\xb}_j^{(l+1)} = \omega \left|{\theta}\right.} \right)$ transition to $\omega$ at the $j$-th position index. The initialization of the symbol vector $\hat{\xb}^{(0)}$ can either be chosen randomly or as other heuristic solutions.

Our algorithm is summarized as follows in Algorithm \ref{alg:reversibleMCMC}.
%%%%%%%%%%%%%%%%%%%%%%%%%%%%%%%%%%%%%%%%%%%%%%%%%%%%%%%%%%
%MCMC algorithm - pseudo-code
%%%%%%%%%%%%%%%%%%%%%%%%%%%%%%%%%%%%%%%%%%%%%%%%%%%%%%%%%%
%{
\begin{algorithm}
\LinesNumbered
  \caption{MCMC detector based on reversible Markov chain}

    \KwIn{$\yb$, $\Hb$, initialization vector $\hat{\xb}^{(0)}$, decision vector $\hat{\xb}=\hat{\xb}^{(0)}$ and the number of iterations $n$}
    \For {$i=1$ \KwTo $n$} {
        pick a uniformly random position index $j$ out of $\{1,2, ..., N\}$  \par
        keep the symbols of $\hat{\xb}^{(i-1)}$ at the $(N-1)$ other positions fixed, transition the $j$-th symbol of $\hat{\xb}^{(i-1)}$  to $\omega$ with probability $p\left( {\hat{\xb}_j^{(l+1)} = \omega \left|{\theta}\right.} \right)$ specified in (\ref{Eq:Prob_of_symbol_MCMC}), for every $\omega \in \{-1,+1\}$ \par
        denote the new vector by $\hat{\xb}^{(i)}$ \par
        if $\|\yb-\Hb\hat{\xb}^{(i)}\|_2^2 <\|\yb-\Hb\hat{\xb}\|_2^2$, update $\hat{\xb}:=\hat{\xb}^{(i)}$
    }
    \label{alg:reversibleMCMC}
\end{algorithm}
%}%%%%%%%%%%%%%%%%%%%%%%%%%%%%%%%%%%%%%%%%%%%%%%%%%%%%%%%%%%%

For this type of MCMC detector, what one cares about is the probability that such an algorithm encounters the true transmitted signal within a certain number of iterations. In general, determining this probability within a certain number of iterations is difficult. However, things are relatively easy when we assume that the underlying Markov chain has mixed to the steady state distribution, which is easy to write down, and therefore in steady state it is easy to determine this probability $P_{en}$ of encountering the true transmitted signal. Therefore, an upper bound on the expected time to find the optimal solution is determined by the mixing time (the time it takes to to reach the steady state) of the underlying Markov chain, and the inverse of the probability $P_{en}$ of encountering the true transmitted signal in the steady state.

We remark that $\alpha$ represents a tunable positive parameter which controls the mixing time of the Markov chain, and this parameter is also sometimes called the ``temperature". If we let $\alpha \rightarrow \infty$, the MCMC detector is a just a uniform random walk in the signal space, namely in each iteration the detector choose constellation points with equal probabilities,  and the underlying Markov chain quickly mixes to its steady state \cite{Levin}. When $\alpha$ is close to $0$, the MCMC detector will eventually ``reside'' at the optimal solution, but it may take a very long time to get there from an initial suboptimal signal vector.

On the one hand, the smaller $\alpha$ is, the larger the stationary probability for the optimal solution will be, and the easier it is for the MCMC detector to find the optimal solution in the stationary distribution.
On the other hand, as $\alpha$ gets smaller, it often takes a long time for the Markov chain to converge to its stationary distribution. In fact, as we will show in the paper, there is often a lower bound on $\alpha$, in order to ensure the fast mixing of the Markov chain to its stationary distribution.

\subsection{Comparisons with conventional MCMC detectors}
Our proposed MCMC detector is different from conventional MCMC detectors \cite{Wang_Poor_03, Zhu_Farhang_Boroujeny_05}.  In \cite{Wang_Poor_03, Zhu_Farhang_Boroujeny_05}, the conditional transition probabilities of the underlying Markov chains were directly determined by the posterior likelihood of data sequences. In other words, the ``temperature'' $\alpha$ of these MCMC detectors is directly set as the standard deviation of channel noise \cite{Wang_Poor_03, Zhu_Farhang_Boroujeny_05}. In this paper, however, we have the freedom of optimizing this temperature parameter $\alpha$.

Our proposed method is also very different from simulated annealing techniques where the temperature is slowly reduced until the detector converges to an acceptable solution. In our MCMC detector, the temperature is set as a \emph{fixed} value, and we care about a fast mixing of the underlying Markov chain to a stationary probability distribution and a big enough probability of encountering the transmitted signal in steady state.

\subsection{Mixing time}
It is not hard to see that the Markov chain of MCMC detector is reversible and has $2^{N}$ states with the stationary distribution $e^{-\frac{1}{2\alpha^2} \left\| \yb - \sqrt{\frac{\SNR}{N}} \Hb \hat{\xb} \right\|^2   }$ (without normalization) for a state $\hat{\xb}$. The $2^N \times 2^N$ transition matrix is denoted by $P$, and the element $P_{i,j}$ in the $i$-th ( $1\leq i \leq N$) row and $j$-th ( $1\leq j \leq N$) column is the probability of transferring to state $j$ conditioned on the previous state is $i$. So each row of $P$ sums up to $1$ and the transition matrix after $t$ iterations is $P^{t}$. We denote the vector for the stationary distribution as $\mathbf{\pi}$. Then for an $\epsilon>0$, the mixing time $t(\epsilon)$ is a parameter describing how long it takes for the Markov chain to get close to the stationary distribution \cite{Levin}, namely,
\begin{equation*}
t_{mix}(\epsilon):=\min\{t: \max_{\tilde{\xb}} \|P^{t}(\tilde{\xb},\cdot)-\mathbf{\pi}\|_{TV}\leq \epsilon\},
\end{equation*}
where $\|\mu-\nu\|_{TV}$ is the usual total variation distance between two distributions $\mu$ and $\nu$ over the state space $\{+1,-1\}^{N}$.
\begin{equation*}
\|\mu-\nu\|_{TV}=\frac{1}{2}\sum_{\zb \in \{+1,-1\}^{N}} |\mu(\zb)-\nu(\zb)|.
\end{equation*}

The mixing time is closely related to the spectrum of the transition matrix $P$. More precisely, for a reversible Markov chain, its mixing time is generally small when the gap between the largest and the second largest eigenvalue of $P$, namely $1-\lambda_2$, is large. The inverse of this gap, $\frac{1}{1-\lambda_2}$, is called the relaxation time for this Markov chain.

\subsection{Sequential Markov Chain Monte Carlo Detectors}
In this paper, for simplicity of implementations, we also consider a sequential MCMC detector, especially in numerical simulations. The only difference between sequential MCMC detectors and reversible MCMC detectors is the way they choose the position index to update. Sequential MCMC detectors can have many block iterations. We define one \emph{block iteration} of the sequential MCMC detector as an sequential update of all the $N$  indices $\left\{1, \ldots, N \right\}$ in the estimated symbol vector $\hat{{\xb}}$, starting from $j=1$ to $j=N$. Namely, in one block iteration, we update $N$ indices. For each index $j$, the updating rule for the sequential MCMC detector is the same as the reversible MCMC detector. We remark, however, that the mixing time results in this paper are only for reversible MCMC detectors.
%%%%%%%%%%%%%%%%%%%%%%%%%%%% New subsection %%%%%%%%%%%%%%%%%%%%%%%%
\subsection{Complexity of the MCMC detector}
\label{subsec:MCMC_complexity}
The conditional probability for the $j$-th symbol in \eqref{Eq:Prob_of_symbol_MCMC} can be computed efficiently by reusing the result obtained in earlier iterations, when we evaluate $\left\| \yb - \sqrt{\SNR/N} \Hb \hat{{\xb}}_{j \left|{\omega}\right.} \right\|^2$. Since we are only changing the $j$-th symbol in the symbol vector, the difference $\db^{(l)} \triangleq \yb - \sqrt{\SNR/N} \Hb \hat{{\xb}}_{j \left|{\omega }\right.} $ can be expressed as
\begin{equation}
\label{EQ:difference_for_norm}
 \db^{(l)} = \db^{(l-1)} - \sqrt{\frac{\SNR}{N}} \hb_{j} \Delta s_{j\left|{\omega}\right.} \ ,
\end{equation}
where $l$ is the index for the number of iterations, $\Delta x_{j\left|{\omega}\right.} \triangleq x_{j \left|{\omega}\right.}^{(l)} - x_{j \left|{ \tilde{\omega} }\right.}^{(l-1)}$, and $\hb_j$ is the $j$-th column of $\Hb$.  Thus, the computation of conditional probability when changing the symbol in the $j$-th position costs $2N$ operations\footnote{We need to compute both the product $\hb_{j} \Delta x_{j\left|{\omega}\right.}$ and the inner product $(\db^{(l)})^T\db^{(l)}$ . }, where we define an operation as a Multiply and Accumulate (MAC) instruction. This leads to a complexity of $O \left( 2N[\left|\Xi \right| -1) ]\right)$ operations per iteration.
%$(\db^{(l)}+\db^{(l-1)})^T (\hb_{j} \Delta x_{j\left|{\omega}\right.})$
%\footnote{We need to compute both the product $\hb_{j} \Delta x_{j\left|{\omega}\right.}$ and the inner product $(\db^{(l)}+\db^{(l-1)})^T \db_{l-1}\hb_{j} \Delta x_{j\left|{\omega}\right.}$ }

%%%%%%%%%%%%%%%%%%%%%%%%%%%% New subsection %%%%%%%%%%%%%%%%%%%%%%%%
\subsection{MCMC sampling using QR- or QL-factorization}
\label{subsec:QR_MCMC}
In the case where the number of iterations in the MCMC detector is sufficiently larger than the system size, the complexity of MCMC detector can be reduced even further using a QR- or QL-factorization of the channel matrix, $\Hb = {\bf \tilde{Q}R} = {\bf QL}$, such that the optimization problem in \eqref{EQ:Original_minimization_problem} becomes
\begin{equation}
\label{EQ:QR_minimization_problem}
 	\mathop {\text{min} }\limits_{s \in \Xi^{N}} \ \  \left\| \tilde{\yb} - \sqrt{\frac{\SNR}{N}} {\bf L} {\xb} \right\|^2 \ ,
\end{equation}
where $\hat{\yb} \triangleq {\bf Q}^T \yb$. Since ${\bf L}$ is a lower triangular matrix, the product ${\bf L}{\xb}$ requires less operations compared to a full channel matrix. Suppose that we need to update position index $j$ at the current iteration and assume $\db^{(l-1)}$ is known, we only need to compute the indices from $j$ to $N$ in $\db^{(l)}$, since these are the only non-zero elements in ${\bf L}_{1:N,j} \Delta x_{j\left|{\Xi}\right.}$. Thus, for a square channel matrix of size $N$ the complexity of one iteration in the MCMC detector can roughly be reduced to half the number of operations, namely $O \left( N[\left|\Xi \right| -1]\right)$. This computation saving should be compared with the complexity of performing the QL-factorization, which requires $O \left( N^3 \right)$, and therefore, in cases where the number of iterations is $k > N^2/\left(\left|\Xi \right| -1\right)$, we can achieve a complexity reduction.

\subsection{Norm-2 MCMC Sampler}
 \label{subsection:norm-2}
In this paper, we also propose a new MCMC detector called norm-2 MCMC detector. For this new MCMC detector, we can more easily lower bound its mixing time. We remark, however, in most parts of this paper, ``the MCMC detector'' refers to the squared-norm-2 MCMC detector in Subsection \ref{subsection:reversible}.

 The norm-2 MCMC detector is mostly identical to the squared-norm-2 MCMC detector, except for the computation of transition probability in \eqref{Eq:Prob_of_symbol_MCMC}. Instead of using  \eqref{Eq:Prob_of_symbol_MCMC}, the transition probability for norm-2 MCMC detector is given by

 \begin{equation}
\label{Eq:Prob_of_symbol_MCMC_norm2}
% p\left( {\hat{\xb}_j^{(k)} = \omega \left|{ \hat{\xb}^{(k-1)}, \yb, \Hb}\right.} \right) =
p\left( {\hat{\xb}_j^{(l+1)} = \omega \left|{ \theta }\right.} \right) =  \frac{e^{-\frac{1}{2\alpha^2} \left\| \yb - \sqrt{\frac{\SNR}{N}} \Hb \hat{\xb}_{j \left|{\omega}\right.} \right\|_2   }}{ \sum\limits_{\hat{\xb}_{j \left|{\tilde{\omega}}\right.} \in \Xi}{e^{-\frac{1}{2\alpha^2} \left\| \yb - \sqrt{\frac{\SNR}{N}} \Hb \hat{\xb}_{j \left|{ \tilde{\omega} }\right.} \right\|_2 } }}  \ ,
\end{equation}
where $\tilde{\xb}_{j \left|{\omega}\right. }^T \triangleq \left[\hat{\xb}_{1:j-1}^{(l)}, \omega, \hat{\xb}_{j+1:N}^{(l)} \right]^T$ and $\theta = \left\{ \hat{\xb}^{(l)}, j, \yb, \Hb \right\}$.  So we only use the $\ell_2$ norm in the exponent of the transition probability.

\section{Probability of Error}
\label{sec:P_error}
First, we would like to derive the probability of error for ML detection in MIMO systems, and then use the results to characterize the SNR regime of interest.  The error probability is calculated by averaging over the random matrices $H$ and random noises. Before we derive the probability of error for the ML detector, we will state a lemma which we will make repeated use of.
\begin{lemma}[Gaussian Integral] \label{Lem:Gaussian_integral} Let $\vb$ and $\xb$ be independent
  Gaussian random vectors with distribution ${\cal N}(\zerob,\Ib_N)$
  each. Then
\be
\Ec \left\{ e^{\eta\left(\|\vb+a\xb\|^2-\|\vb\|^2\right)} \right\}=
  \left(\frac{1}{1-2a^2\eta(1+2\eta)}\right)^{N/2}.
\ee
\label{lem:g-i}
\end{lemma}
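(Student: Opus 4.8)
The plan is to reduce the $N$-dimensional expectation to a scalar one and then evaluate the scalar expectation by conditioning. The key observation is that the exponent decouples across coordinates: writing $v_i$ and $x_i$ for the components of $\vb$ and $\xb$, which are mutually independent $\Nc(0,1)$ variables, we have $\|\vb+a\xb\|^2-\|\vb\|^2=\sum_{i=1}^{N}\big[(v_i+ax_i)^2-v_i^2\big]$. Hence the expectation factors as
\be
\Ec\left\{e^{\eta\left(\|\vb+a\xb\|^2-\|\vb\|^2\right)}\right\}=\left(\Ec\left\{e^{\eta\left((v+ax)^2-v^2\right)}\right\}\right)^{N},
\ee
where now $v$ and $x$ are independent scalar $\Nc(0,1)$ variables. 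So it suffices to show the scalar expectation equals $\left(1-2a^2\eta(1+2\eta)\right)^{-1/2}$.

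Next I would evaluate the scalar expectation by first conditioning on $x$ and integrating over $v$. Expanding $(v+ax)^2-v^2=2axv+a^2x^2$ and applying the Gaussian moment generating function $\Ec\{e^{sv}\}=e^{s^2/2}$ with $s=2a\eta x$ yields $\Ec\big\{e^{\eta((v+ax)^2-v^2)}\,\big|\,x\big\}=e^{\eta a^2x^2+2\eta^2a^2x^2}=e^{a^2\eta(1+2\eta)x^2}$. Integrating over $x$ with the chi-square moment generating function $\Ec\{e^{tx^2}\}=(1-2t)^{-1/2}$, valid for $t<1/2$, and setting $t=a^2\eta(1+2\eta)$, gives exactly $\left(1-2a^2\eta(1+2\eta)\right)^{-1/2}$; raising to the $N$-th power completes the proof.

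An equivalent and equally short route, which I would mention as an alternative, is to recognize $\eta\left(\|\vb+a\xb\|^2-\|\vb\|^2\right)=\eta\left(2a\,\vb^{T}\xb+a^2\|\xb\|^2\right)$ as a quadratic form $\zb^{T}M\zb$ in the $2N$-dimensional standard Gaussian vector $\zb=(\vb^{T},\xb^{T})^{T}$, where $M$ is the symmetric block matrix with diagonal blocks $\zerob$ and $a^2\eta\Ib_N$ and off-diagonal blocks $a\eta\Ib_N$, and then to invoke the standard identity $\Ec\{e^{\zb^{T}M\zb}\}=\det(\Ib_{2N}-2M)^{-1/2}$; a Schur-complement computation gives $\det(\Ib_{2N}-2M)=(1-2a^2\eta-4a^2\eta^2)^{N}=\left(1-2a^2\eta(1+2\eta)\right)^{N}$. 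There is no genuine obstacle in either approach; the only points that need a little care are tracking the convergence condition $a^2\eta(1+2\eta)<1/2$ (outside of which both sides are $+\infty$, so the identity still holds in the extended sense) and, in the second approach, symmetrizing the cross term $2a\,\vb^{T}\xb$ correctly when reading off $M$.
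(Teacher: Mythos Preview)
Your proposal is correct. Your primary route---factoring the expectation into $N$ independent scalar problems and then conditioning on $x$ to reduce to two successive one-dimensional moment generating function evaluations---is a genuinely different (and slightly more elementary) argument from the paper's. The paper instead works directly with the full $2N$-dimensional Gaussian integral: it combines the density $(2\pi)^{-N}e^{-\frac{1}{2}(\|\vb\|^2+\|\xb\|^2)}$ with the factor $e^{\eta(\|\vb+a\xb\|^2-\|\vb\|^2)}$, reads off the resulting quadratic form with block matrix
\[
\begin{pmatrix}\Ib_N & -2a\eta\Ib_N\\ -2a\eta\Ib_N & (1-2a^2\eta)\Ib_N\end{pmatrix},
\]
and computes its determinant. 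This is exactly the alternative you sketch at the end (your $\Ib_{2N}-2M$ is precisely the paper's block matrix), so you have in fact anticipated the paper's method as your second option. What your conditioning approach buys is that it avoids any multivariate determinant bookkeeping and makes the convergence condition $a^2\eta(1+2\eta)<\tfrac12$ appear naturally at the final step; what the paper's approach buys is a single-shot computation that would generalize more readily if $\vb$ and $\xb$ had nontrivial covariance structure.
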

\begin{proof}
\label{Pro:Gaussian_integral}
See Appendix \ref{subsec:Gaussian_integral} for a detailed proof.
\end{proof}
\vspace{2mm}
Let us first look at the probability of error using maximum likelihood
detection. We will make an error if there exists a vector ${\xb}\neq -\oneb$
such that
\[
\left\|\yb-\sqrt{\frac{\SNR}{N}}\Hb{\xb}\right\|^2\leq
\left\|\yb+\sqrt{\frac{\SNR}{N}} \Hb\oneb\right\|^2 = \|\upsib\|^2 \ .
\] In other words,
\begin{eqnarray*}
P_e & = & \mbox{Prob}\left(\left\|\yb-\sqrt{\frac{\SNR}{N}}\Hb{\xb}\right\|^2\leq
  \|\upsib\|^2 \right) \\%% \mbox{for some ${\xb}\neq -\oneb$}\right)  \\
& = & \mbox{Prob}\left(\left\|\upsib+\sqrt{\frac{\SNR}{N}}\Hb(-\oneb-{\xb})\right\|^2\leq
  \|\upsib\|^2 \right) \ ,\\%% \mbox{for some ${\xb}\neq -\oneb$}\right) \\
\end{eqnarray*}
for some ${\xb}\neq -\oneb$, which can be formulated as
\begin{eqnarray*}
P_e & = &  \mbox{Prob}\left(\left\|\upsib+2\sqrt{\frac{\SNR}{N}}\Hb\delb\right\|^2\leq
  \|\upsib\|^2 \right) \ ,%% \mbox{for some $\delb\neq 0$}\right).
\end{eqnarray*}
for some $\delb\neq 0$, where $\delb  \triangleq \frac{1}{2}(-\mathbf{1}-\xb)$. Note that in the above equation $\delb$ is a vector of zeros and
$-1$'s. Now using the union bound
\be
P_e \leq \sum_{\delb\neq 0}
\mbox{Prob}\left(\left\|\upsib+2\sqrt{\frac{\SNR}{N}}\Hb\delb\right\|^2\leq
  \|\upsib\|^2\right).
\ee
We will use the Chernoff bound to bound the quantity inside the
summation. Thus,
\begin{subequations}
\begin{align}
& \mbox{Prob}\left(\left\|\upsib+2\sqrt{\frac{\SNR}{N}}\Hb\delb\right\|^2 \leq \|\upsib\|^2\right) \\
& \leq \Ec \left\{ e^{-\beta  \left(\left\|\upsib+2\sqrt{\frac{\SNR}{N}}\Hb\delb\right\|^2-\|\upsib\|^2\right)} \right\} \\
& = \left(\frac{1}{1+8\frac{\SNR\|\delb\|^2}{N} \beta(1-2\beta)}\right)^{N/2},
\end{align}
\end{subequations}
where $\beta\geq 0$ is the Chernoff parameter, and where we have used
Lemma \ref{lem:g-i} with $\eta = -\beta$ and
$a = 2\sqrt{\frac{\SNR\|\delb\|^2}{N}}$, since
\[ \Ec \left\{ \left(2\sqrt{\frac{\SNR}{N}}\Hb\delb\right)
\left(2\sqrt{\frac{\SNR}{N}}\Hb\delb\right)^T \right\} =
4\frac{\SNR\|\delb\|^2}{N} \Ib_N. \]
The optimal value for $\beta$ is $\frac{1}{4}$, which yields the
tightest bound
\be
\mbox{Prob}\left(\left\|\upsib+2\sqrt{\frac{\SNR}{N}}\Hb\delb\right\|^2\leq
  \|\upsib\|^2\right) \leq \left(\frac{1}{1+\frac{\SNR\|\delb\|^2}{N}}
\right)^{N/2}.
\ee
Note that this depends only on $\|\delb\|^2$, the number of nonzero
entries in $\delb$. Plugging this into the union bound yields
\be
P_e \leq \sum_{i=1}^N \left( \begin{array}{c} N \\ i \end{array}
\right) \left(\frac{1}{1+\frac{\SNR i}{N}}
\right)^{N/2}.
\ee

Let us first look at the linear (i.e., $i$ proportional to $N$) terms
in the above sum. Thus,
\[ \left( \begin{array}{c} N \\ i \end{array}
\right) \left(\frac{1}{1+\frac{\SNR i}{N}}
\right)^{N/2} \approx
e^{N H(\frac{i}{N})-\frac{N}{2}\ln\left(1+\frac{\SNR i}{N}\right)}, \]
where $H(\cdot)$ is entropy in ``nats''. Clearly, if
\[ \lim_{N\rightarrow\infty}\SNR = \infty , \]
then the linear terms go to zero (superexponentially fast).

Let us now look at the sublinear terms. In particular, let us look at
$i=1$:
\[ N\left(\frac{1}{1+\frac{\SNR}{N}}\right)^{N/2} \approx
N e^{-\SNR/2}. \]
Clearly, to have this term go to zero, we require that $\SNR>
2\ln N$.

A similar argument shows that all other sublinear terms also go to
zero, and so we have:

\begin{lemma}[SNR scaling] If $\SNR>2\ln N+f(N)$, where $f(N)$ is an arbitrary function that goes to $\infty$ as $N \rightarrow \infty$, then
  $P_e\rightarrow 0$ as $N\rightarrow\infty$.
\end{lemma}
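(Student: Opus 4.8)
The starting point is the union--Chernoff bound derived just above, namely $P_e\le\sum_{i=1}^{N}\binom{N}{i}\big(1+\tfrac{\SNR\, i}{N}\big)^{-N/2}$; the plan is to turn the informal term-by-term discussion that follows it into a uniform estimate. I would split the sum at a slowly growing cutoff $k_N$ into a small-$i$ part, $1\le i\le k_N$, and a large-$i$ part, $k_N<i\le N$, and --- if it helps the bookkeeping --- also split according to whether $\SNR=o(\sqrt N)$ or $\SNR=\Omega(\sqrt N)$, the latter regime being much easier since then $(1+\SNR i/N)^{-N/2}$ is already super-polynomially small and swamps even the crude bound $\binom{N}{i}\le e^{N H(i/N)}$ (with $H$ the binary entropy in nats).

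The crux is the small-$i$ part. Choose $k_N$ so that $\SNR k_N/N\to 0$ (e.g.\ $k_N=\lfloor\sqrt N/\SNR\rfloor$ when $\SNR=o(\sqrt N)$). On $1\le i\le k_N$ one has $\SNR i/N\to0$ uniformly, so by $\ln(1+x)=x-O(x^2)$ one gets $(1+\SNR i/N)^{-N/2}\le e^{-\SNR i(1-o(1))/2}$; together with $\binom{N}{i}\le N^i/i!$ the $i$-th term is at most $\tfrac{1}{i!}\big(N e^{-\SNR(1-o(1))/2}\big)^i$. This is exactly where the hypothesis $\SNR>2\ln N+f(N)$ --- rather than merely $\SNR>2\ln N$ --- is used: it gives $N e^{-\SNR/2}\le e^{-f(N)/2}\to 0$, and the $o(1)$ correction is harmless because $\SNR\cdot o(1)\to0$ on this range, so $\epsilon_N:=N e^{-\SNR(1-o(1))/2}\to0$ and $\sum_{i\ge1}\epsilon_N^i/i!=e^{\epsilon_N}-1\to0$.

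For the large-$i$ part I would write the $i$-th term as $\exp\!\big(N[\,H(i/N)-\tfrac12\ln(1+\SNR i/N)\,]\big)$, bound $H(p)\le p\ln(e/p)$ (and $H(p)\le\ln2$ when $p=i/N$ is bounded away from $0$), and use $\ln(1+x)\ge x/2$ for $x\le1$ and $\ln(1+x)\ge\ln x$ for $x\ge1$. Using $i>k_N$ and $\SNR>2\ln N$, one checks that beyond the cutoff the denominator $(1+\SNR i/N)^{N/2}$ is large enough that the bracket is negative and bounded away from $0$ by some $\delta_N$ with $N\delta_N\to\infty$ fast (polynomially in $N$, at worst of order $N/\mathrm{polylog}\,N$); hence $N$ times the largest such term still tends to $0$ and this part vanishes too.

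The main obstacle is the intermediate range of $i$ handled \emph{uniformly}: the clean first-order estimate $\ln(1+\SNR i/N)\approx\SNR i/N$, on which the borderline case $\SNR\approx2\ln N$ hinges, is available only while $\SNR i/N\to0$, and for larger $i$ one must fall back on the weaker-constant (but still super-polynomially decaying) bound while keeping track of how the cutoff $k_N$, the decay rate, and the size of $\SNR$ relative to $N$ interact. No idea beyond the three estimates above --- the bound on $\binom{N}{i}$, the small-$x$ expansion of $\ln(1+x)$, and $\ln(1+x)\ge\max(x/2,\ln x)$-type bounds for general $x$ --- is needed; the slack $f(N)$ is consumed entirely in the small-$i$ (equivalently, $i=1$) term.
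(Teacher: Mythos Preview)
Your proposal is correct and follows the same route as the paper: both start from the union--Chernoff bound $P_e\le\sum_{i=1}^N\binom{N}{i}(1+\SNR\,i/N)^{-N/2}$ and argue term by term, with the $i=1$ term dictating the threshold $\SNR>2\ln N$ and the remaining terms being dominated once $\SNR\to\infty$. The paper's argument is in fact only a sketch --- it treats $i=1$ and $i\propto N$ explicitly and then asserts that ``a similar argument shows that all other sublinear terms also go to zero'' --- so your introduction of an explicit cutoff $k_N$ and the accompanying uniform estimates are precisely what is needed to turn that sketch into a proof.
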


\section{Computing the optimal $\alpha$}
\label{sec:comp_opt_alpha}
In this section, we derive the optimal value of the ``temperature" parameter which controls the mixing time of the underlying Markov chain. The temperature has the desirable property that once the Markov chain has mixed to steady state, there is only polynomially (and not exponentially) small probability of encountering the optimal solution.
\subsection{Mean of $\pi_{-\oneb}$}
In the following section we compute the expected value of the stationary probabilities of the states, where the expectation is taken over random Gaussian $H$ and noises. More specifically, we are examining the probability of state ${\xb} = -\oneb$, denoted by $\pi_{-\oneb}$ (recall that we assumed that $-\oneb$ is transmitted symbol vector).

This calculation has a lot in common with the one given in Section \ref{sec:P_error}. Note that
%\begin{eqnarray}
\begin{subequations}
\label{Eq:mean_of_pi}
\begin{align}
\pi_{-\oneb} & = \frac{e^{-\frac{1}{2\alpha^2}\left\|\yb+\sqrt{\frac{\SNR}{N}}\Hb \oneb\right\|^2}}
{\sum_{\xb} e^{-\frac{1}{2\alpha^2}\left\|\yb+\sqrt{\frac{\SNR}{N}}\Hb {\xb}\right\|^2}} \\
& = \frac{e^{-\frac{1}{2\alpha^2}\left\|\upsib\right\|^2}}
{\sum_{\xb} e^{-\frac{1}{2\alpha^2}\left\|\upsib+\sqrt{\frac{\SNR}{N}}\Hb({\xb}-\oneb)\right\|^2}} \\
& = \frac{e^{-\frac{1}{2\alpha^2}\left\|\upsib\right\|^2}}
{\sum_{\delb} e^{-\frac{1}{2\alpha^2}\left\|\upsib+2\sqrt{\frac{\SNR}{N}}\Hb \delb\right\|^2}} \\
& = \frac{1} {\sum_{\delb} e^{-\frac{1}{2\alpha^2}\left(
\left\|\upsib+2\sqrt{\frac{\SNR}{N}}\Hb\delb\right\|^2-\|\upsib\|^2\right)}} \ ,
\end{align}
\end{subequations}
%\end{eqnarray}
where $\delb$ is a vector of zeros and ones.

Now, using Jensen's inequality and the convexity of $\frac{1}{t}$ when $t>0$,
\begin{subequations}
\begin{align}
\Ec \left\{ \pi_{-\oneb} \right\} & \geq \frac{1}{\Ec \left\{ \frac{1}{  \pi_{-\oneb} }\right\} } \\
 & =
\frac{1}{\Ec \left\{ \sum_{\delb} e^{-\frac{1}{2\alpha^2}\left(
\left\|\upsib+2\sqrt{\frac{\SNR}{N}}\Hb \delb\right\|^2-\left\|\upsib\right\|^2\right)} \right\}} \\
& = \frac{1}
{\sum_{\delb} \Ec \left\{ e^{-\frac{1}{2\alpha^2}\left(
\left\|\upsib+2\sqrt{\frac{\SNR}{N}}\Hb \delb\right\|^2-\left\|\upsib\right\|^2\right) }\right\} } \\
& = \frac{1}{1+\sum_{\delb \neq 0}
\left(\frac{1}{1+4\frac{\SNR\left\|\delb\right\|^2}{N}
\frac{1}{\alpha^2}(1-\frac{1}{\alpha^2})}\right)^{N/2}} \label{Eq:Jensen_part4} \\
& = \frac{1}{1+\sum_{i=1}^N\left(\begin{array}{c} N \\ i \end{array} \right)
\left(\frac{1}{1+\frac{\beta i}{N}}\right)^{N/2}} \label{Eq:Jensen_part5} \ .
\end{align}
\end{subequations}
In \eqref{Eq:Jensen_part4} we have used Lemma \ref{lem:g-i} and in \eqref{Eq:Jensen_part5} we have defined $\beta \triangleq 4\SNR\frac{1}{\alpha^2}(1-\frac{1}{\alpha^2})$.
 While it is possible to focus on the linear and sublinear terms in the
above summation separately, to give conditions for $\Ec \left\{ \pi_{-\oneb} \right\}$ to have
the form of $1/\mbox{poly}(N)$, we will be interested in the exact
exponent of the poly and so we need a more accurate estimate. To do
this we shall use saddle point integration. To this end, note that
\begin{equation*}
\left(\begin{array}{c} N \\ i \end{array} \right)
\left(\frac{1}{1+\frac{\beta i}{N}}\right)^{N/2} \\
= e^{N H(\frac{i}{N})-\frac{N}{2}\ln\left(1+\frac{\beta i}{N}\right)} \ ,
\end{equation*}
again $H(\cdot)$ represents the entropy in ``nats''. And so the summation in the denominator of \eqref{Eq:Jensen_part5} can be
approximated as a Stieltjes integral:
\begin{subequations}
\begin{align}
\sum_{i=1}^N\left(\begin{array}{c} N \\ i \end{array} \right)
\left(\frac{1}{1+\frac{\beta i}{N}}\right)^{N/2} & = N\sum_{i=1}^Ne^{N H(\frac{i}{N})-\frac{N}{2}\ln\left(1+\frac{\beta i}{N}\right)}\frac{1}{N} \\
& = N\int_0^1e^{N H(x)-\frac{N}{2}\ln\left(1+\beta x\right)}dx \ .
\end{align}
\label{stieltjesintegral}
\end{subequations}
For large $N$, this is a saddle point integral and can be approximated
by the formula
\be
\int_0^1e^{N f(x)}dx \approx \sqrt{\frac{2\pi}{N|f''(x_0)|}}e^{N f(x_0)} \ ,
\label{Eq:saddlepoint}
\ee
where $x_0$ is the saddle point of $f(\cdot)$, i.e.,$f'(x_0) = 0.$
In our case,
\[ f(x) = -x\ln x-(1-x)\ln(1-x)-\frac{1}{2}\ln(1+\beta x) \ , \]
and so
\[ f'(x) = \ln\frac{1-x}{x}-\frac{1}{2}\frac{\beta}{1+\beta x} \ . \]
In general, it is not possible to solve for $f'(x_0) = 0$ in closed
form. However, in our case, we assume that $\beta = 4\SNR
\frac{1}{\alpha^2}(1-\frac{1}{\alpha^2})\gg 1$ (In fact, we must have $\beta\rightarrow \infty$ as $N\rightarrow \infty$. Otherwise, (\ref{stieltjesintegral}) will be exponential in $N$ ). In this case, it is not too hard to verify that the saddle point is given by
\be
x_0 \approx e^{-\frac{\beta}{2}} \ .
\ee
And hence
\begin{equation*}
\begin{split}
f(x_0) & = -e^{-\frac{\beta}{2}}\ln e^{-\frac{\beta}{2}} -(1-e^{-\frac{\beta}{2}})\ln(1-e^{-\frac{\beta}{2}}) \\
& \quad - \frac{1}{2}\ln(1+\beta e^{-\frac{\beta}{2}}) \\
& \approx
\frac{\beta}{2}e^{-\frac{\beta}{2}}+e^{-\frac{\beta}{2}}-\frac{1}{2} \beta e^{-\frac{\beta}{2}} \\
& = e^{-\frac{\beta}{2}} \ ,
\end{split}
\end{equation*}
and further plugging $x_0$ into
%\[ f''(x) = -\frac{1}{x}-\frac{1-x}+\frac{1}{2}\frac{\beta^2}{(1+\beta
%  x)^2}, \]
\[ f''(x) = -\frac{1}{x} -\frac{1}{1-x}-\frac{1}{2}\frac{\beta^2}{(1+\beta x)^2} \ , \]
yields
\be
f''(x_0) \approx -e^{\frac{\beta}{2}}-1+\frac{1}{2}\beta^2 \approx
-e^{\frac{\beta}{2}} \ .
\ee
Replacing these into the saddle point expression in \eqref{Eq:saddlepoint} show that
\begin{equation}
\label{Eq:sum_approx_by_saddl}
\sum_{i=1}^N \left(\begin{array}{c} N \\ i \end{array} \right)
\left(\frac{1}{1+\frac{\beta i}{N}}\right)^{N/2} \approx
\sqrt{2\pi/N} \exp{\left(N e^{-\frac{\beta}{2}}-\frac{\beta}{4}\right)} \ .
\end{equation}
We want $\Ec \left\{ \pi_{-\oneb} \right\}$ to behave as $\frac{1}{N^\zeta}$ and according to \eqref{Eq:mean_of_pi} this means
that we want the expression in \eqref{Eq:sum_approx_by_saddl} to behave as $N^{\zeta}$, where $\zeta$ is a positive number. Let us take
\[ e^{Ne^{-\frac{\beta}{2}}} = N^{\zeta} \ . \]
Solving for $\beta$ yields
\be
\beta = 2(\ln N -\ln\ln N -\ln\zeta).
\ee
Incidentally, this choice of $\beta$ yields
\be
e^{-\frac{\beta}{4}} \approx \frac{1}{\sqrt{N}}.
\ee
Finally, this choice of $\beta$ means that we have
\[ 4\SNR \frac{1}{\alpha^2}\left(1-\frac{1}{\alpha^2}\right) =
2(\ln N -\ln\ln N -\ln\zeta) \ , \]
and so we have the following result.

\begin{lemma}[Mean of $\pi_{-\oneb}$] As $N\rightarrow \infty$, if $\alpha$ is chosen such that
\be
\label{EQ:Alpha_expression}
\frac{\alpha^2}{1-\frac{1}{\alpha^2}} = \frac{4\SNR}
{2(\ln N -\ln\ln N -\ln\zeta)} \ ,
\ee
then
\be
\label{Eq:Mean_invese_pi_lemma}
\Ec \left\{\frac{1}{\pi_{-\oneb}} \right\} \leq N^{\zeta} \ .
\ee
and
\be
\label{Eq:Mean_pi_lemma}
\Ec \left\{\pi_{-\oneb} \right\} \geq N^{-\zeta} \ .
\ee
\end{lemma}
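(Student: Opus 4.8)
The plan is to assemble the lemma from the identities and estimates already developed in this section; concretely, three steps must be made precise: the exact identity for $\Ec\{1/\pi_{-\oneb}\}$, the saddle-point estimate of the resulting sum, and the Jensen inversion.

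I would start from \eqref{Eq:mean_of_pi}, which writes $1/\pi_{-\oneb}$ as the finite sum $\sum_{\delb} e^{-\frac{1}{2\alpha^2}\left(\|\upsib+2\sqrt{\SNR/N}\,\Hb\delb\|^2-\|\upsib\|^2\right)}$ over all $\delb\in\{0,1\}^N$. Taking $\Ec\{\cdot\}$ over $(\Hb,\upsib)$ and using linearity to pull it inside the finite sum, each term is handled by Lemma \ref{lem:g-i} with $\eta=-\frac{1}{2\alpha^2}$ and $a=2\sqrt{\SNR\|\delb\|^2/N}$ — legitimate because $\Hb\delb\sim\Nc(\zerob,\|\delb\|^2\Ib_N)$ is independent of $\upsib$ — giving $\Ec\{1/\pi_{-\oneb}\}=1+\sum_{\delb\neq\zerob}\left(1+\frac{4\SNR\|\delb\|^2}{N}\frac{1}{\alpha^2}(1-\frac{1}{\alpha^2})\right)^{-N/2}$. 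Since each term depends on $\delb$ only through the weight $i=\|\delb\|^2$, grouping the $\binom{N}{i}$ vectors of weight $i$ gives $\Ec\{1/\pi_{-\oneb}\}=1+\sum_{i=1}^N\binom{N}{i}(1+\beta i/N)^{-N/2}$ with $\beta:=4\SNR\frac{1}{\alpha^2}(1-\frac{1}{\alpha^2})$ — this is exactly the chain displayed on the way to \eqref{Eq:Jensen_part5} — and hypothesis \eqref{EQ:Alpha_expression} is precisely $\beta=2(\ln N-\ln\ln N-\ln\zeta)$.

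The analytic core is to estimate $S_N:=\sum_{i=1}^N\binom{N}{i}(1+\beta i/N)^{-N/2}$, whose generic term equals $e^{Nf(i/N)}$ up to Stirling prefactors, where $f(x)=H(x)-\frac12\ln(1+\beta x)$. Since $\beta=\beta(N)\to\infty$, the interior maximizer is $x_0\approx e^{-\beta/2}$, with $f(x_0)\approx e^{-\beta/2}$ and $f''(x_0)\approx-e^{\beta/2}$, so Laplace's method gives $S_N\approx\sqrt{2\pi/N}\,\exp(Ne^{-\beta/2}-\beta/4)$ as in \eqref{Eq:sum_approx_by_saddl}. Substituting $\beta=2(\ln N-\ln\ln N-\ln\zeta)$ makes $Ne^{-\beta/2}=\zeta\ln N$ and $e^{-\beta/4}\approx N^{-1/2}$, so $S_N$ equals $N^\zeta$ times an $o(1)$ algebraic prefactor; hence $S_N\le N^\zeta$ for all large $N$, and $\Ec\{1/\pi_{-\oneb}\}=1+S_N\le N^\zeta$, giving \eqref{Eq:Mean_invese_pi_lemma}. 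Finally, exactly as in the Jensen step that produced \eqref{Eq:Jensen_part5} — convexity of $t\mapsto1/t$ on $t>0$ — we obtain $\Ec\{\pi_{-\oneb}\}\ge1/\Ec\{1/\pi_{-\oneb}\}\ge N^{-\zeta}$, which is \eqref{Eq:Mean_pi_lemma}.

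The main obstacle is rigor in the saddle-point step, for two related reasons. First, the discrete-to-continuous passage: replacing $\binom{N}{i}$ by $e^{NH(i/N)}$ and the sum by an integral introduces Stirling prefactors $\sqrt{N/(i(N-i))}$ that are not negligible near the peak $i\approx Nx_0=\Theta(\ln N)$, so they must be tracked — for instance by localizing to a window of width $\Theta(\sqrt{\ln N})$ about $i_0$ and bounding the $i=1$ boundary term and the exponentially small tail separately. Second, and more delicately, $x_0\approx e^{-\beta/2}$ solves $f'(x_0)=0$, and $f(x_0)\approx e^{-\beta/2}$, only to leading order, yet these quantities are multiplied by $N$ in the exponent; to pin the exponent at $\zeta$ rather than $\zeta+o(1)$ — the whole point of this ``exact exponent'' computation — one must control the error $N(f(x_0)-e^{-\beta/2})$ to $o(1)$, which requires solving the saddle-point equation to higher order. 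Everything else — the termwise Gaussian integral, the weight-counting, and the Jensen inversion — is routine.
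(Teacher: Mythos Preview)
Your proposal is correct and follows essentially the same route as the paper: the exact identity $\Ec\{1/\pi_{-\oneb}\}=1+\sum_{i=1}^N\binom{N}{i}(1+\beta i/N)^{-N/2}$ via Lemma~\ref{lem:g-i}, the saddle-point approximation \eqref{Eq:saddlepoint}--\eqref{Eq:sum_approx_by_saddl}, the choice of $\beta$ to make $Ne^{-\beta/2}=\zeta\ln N$, and Jensen for \eqref{Eq:Mean_pi_lemma}. The rigor issues you flag in the saddle-point step (Stirling prefactors near $i=\Theta(\ln N)$, higher-order control of $f(x_0)$) are real but are left at the same heuristic level in the paper itself, which treats \eqref{Eq:sum_approx_by_saddl} as an asymptotic approximation rather than a sharp inequality.
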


When we have an upper bound on $\Ec \left\{\frac{1}{\pi_{-\oneb}} \right\}$, we can then use the Markov inequality to give upper bounds on the probability that $\frac{1}{\pi_{-\oneb}}$ exceeds a certain threshold. More precisely, we have
$P( \frac{1}{\pi_{-\oneb}} > N^{\gamma'})\leq \Ec \left\{\frac{1}{\pi_{-\oneb}} \right\}/N^{\gamma'} \leq  N^{-(\gamma' - \zeta)}$
for any $\gamma'$. This means that with probability close to $1$ as $N \rightarrow \infty$, the expected time to encounter the transmitted signal in steady state is no bigger than $N^{\gamma'}$, for every $\gamma'>\zeta$.

\subsection{Value of $\alpha$}
In this subsection we investigate how $\alpha$ behaves as a function of the $\SNR$ and the system dimension, if $\alpha$ is chosen
according to \eqref{EQ:Alpha_expression}. In general, the larger
$\alpha$ is, the faster the Markov chain mixes. However, choosing $\alpha$ any larger than this means that the probability of finding
the optimal solution in stationary distribution is exponentially small. Thus, when choosing the value of $\alpha$, there is a trade-off between faster mixing time of the Markov chain (due to an increase of $\alpha$), and faster encountering the optimal solution in stationary distribution.
In the following, we evaluate \eqref{EQ:Alpha_expression} with $\zeta = i$, denoted as $\alpha_{\zeta=i}$ and we also approximate $\alpha$ in \eqref{EQ:Alpha_expression} by neglecting the terms $\ln\ln(N)$ and $\ln(\zeta)$, leading to
\begin{equation}
	\label{EQ:Alpha_approx}
	\frac{\tilde{\alpha}^4}{\tilde{\alpha}^2-1} = \frac{2 \SNR}{\ln(N)} \ .	
\end{equation}
From \eqref{EQ:Alpha_approx} we see that
\begin{equation}
	\label{EQ:Alpha_approx_quad_equation}
	\alpha^2 =  \frac{\SNR}{\ln(N)} \pm \sqrt{\left(\frac{\SNR}{\ln(N)}\right)^2 - 2\frac{\SNR}{\ln(N)}} \ ,
\end{equation}
which implies that $\tilde{\alpha}$ becomes complex when $\SNR < 2\ln(N)$. However, as stated in Section \ref{sec:System_model} we focus on $\SNR > 2 \ln(N)$. Since we are solving a quadratic equation we get two values of $\alpha^2$, representing the region in which \eqref{Eq:Mean_pi_lemma} is satisfied. Based on the considerations given above, we prefer the value of $\alpha^2$ obtained by the plus sign in \eqref{EQ:Alpha_approx_quad_equation}, denoted $\alpha^2_+$, in order to achieve the fastest mixing time.
In Figure \ref{fig:3_alphas_vs_N_SNR=10dB} the values of $\alpha_{\zeta=2}$, $\alpha_{\zeta=1}$, and $\alpha_{+}$ have been plotted as a function of the system dimension for $\SNR = 10$ dB. Our simulations also suggest that the computed value of $\alpha$ is very close to the optimal choice, even in the case where the condition $\SNR>2\ln(N)$ is not satisfied.
\begin{figure}[tbp]
 \center
% \vspace{-35mm}
 %\includegraphics[width=2.8in]{}
\includegraphics[width=2.8in]{}
\caption{Value of $\alpha$ vs. system size $N$, for $\SNR$ = 10 dB.}
\label{fig:3_alphas_vs_N_SNR=10dB}
%\vspace{-3mm}
\end{figure}
\subsection{Mixing time of Markov Chain}
So far, we have examined the largest possible $\alpha$ such that the optimal sequence has a reasonable stationary probability. However, all this was based on assuming that we have reached the stationary distribution.  As $\alpha$ also affects the speed of getting to the stationary distribution, it is interesting to quantify the mixing time of the Markov chain of MCMC detectors.  In the next sections, we will discuss how the mixing time is related to $\alpha$ and the underlying lattice structures in ILS problems.  
\section{Mixing Time with Orthogonal Matrices}
\label{sec:mixing_time_ortho}
 Starting from this section, we consider the mixing time for MCMC for ILS problems and study how the mixing time for ILS problem depends on the linear matrix structure and $\SNR$. As a first step, we consider a linear matrix $\Hb$ with orthogonal columns. As shown later, the mixing time for this matrix has an upper bound independent of $\SNR$. In fact, this is a general phenomenon for ILS problems without local minima.

For simplicity, we use ${\acute{\Hb}}$ to represent $\sqrt{\frac{\SNR}{N}}{\Hb}$ , and the model we are currently considering is
\begin{equation}
\label{eq:orthogonal}
\yb={\acute{\Hb}} \xb +\upsib.
\end{equation}
When the $\SNR$ increases, we simply increase the amplitude of elements in ${\acute{\Hb}}$. We will also incorporate the $\SNR$ term into ${\acute{\Hb}}$ this way in the following sections unless stated otherwise. %More specifically, the MCMC detector we investigate here is the MCMC detector, which computes the conditional probability of each symbol in the constellation set at the $j$'th index in the estimated symbol vector. This conditional probability is obtained by keeping the $j-1$ other values in the estimated symbol vector fixed. Thus, in $k$'th iteration the probability of the $j$'th symbol adopts the value $\omega$, is given as

\begin{theorem}
Independent of the temperature $\alpha$ and $SNR$, the mixing time of the MCMC detector for orthogonal-column ILS problems is upper bounded by $N \log(N)+\log(1/\epsilon)N$.
\end{theorem}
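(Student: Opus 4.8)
The plan is to use the fact that orthogonality of the columns of ${\acute\Hb}$ makes the single-coordinate updates \emph{independent}, which collapses the whole mixing-time question into a coupon-collector estimate. The first step is an algebraic identity. Let $\acute\hb_j$ denote the $j$-th column of $\acute\Hb$ and, for the current state $\hat\xb^{(l)}$, let $\rb_j \triangleq \yb - \sum_{k\neq j}\acute\hb_k\hat\xb^{(l)}_k$ be the residual with the $j$-th column removed. Expanding the squared norm inside \eqref{Eq:Prob_of_symbol_MCMC} gives $\|\yb-\acute\Hb\hat\xb_{j|\omega}\|^2 = \|\rb_j\|^2 - 2\omega\,\rb_j^T\acute\hb_j + \omega^2\|\acute\hb_j\|^2$. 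Orthogonality makes the cross terms vanish, so $\rb_j^T\acute\hb_j = \yb^T\acute\hb_j$, and $\omega\in\{\pm1\}$ makes $\omega^2=1$; hence the $\omega$-independent factors $\|\rb_j\|^2$ and $\|\acute\hb_j\|^2$ cancel in the ratio \eqref{Eq:Prob_of_symbol_MCMC}, leaving $p(\hat\xb^{(l+1)}_j=\omega\mid\theta) = e^{\omega\,\yb^T\acute\hb_j/\alpha^2}\big/\big(e^{\yb^T\acute\hb_j/\alpha^2}+e^{-\yb^T\acute\hb_j/\alpha^2}\big)$. The crucial point is that this depends only on the fixed data $\yb$, $\acute\hb_j$, $\alpha$, and not on the symbols at the other $N-1$ positions.

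Next I would run a standard coalescence coupling. Let $(\hat\xb^{(t)})$ start from an arbitrary state $\tilde\xb$ and let $(\zb^{(t)})$ start from the stationary distribution $\pi$; at each step feed both chains the \emph{same} uniformly chosen index $j_t\in\{1,\dots,N\}$ and the \emph{same} uniform$[0,1]$ variable $U_t$ used to select the new symbol at position $j_t$ from the conditional above. By the previous paragraph, the symbol written at position $j_t$ is a deterministic function of $(j_t,U_t,\yb,\acute\Hb,\alpha)$ alone, so after the first time a given index $j$ is selected the two chains agree in coordinate $j$; and since every later update of coordinate $j$ again ignores the other coordinates, they keep agreeing there forever. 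Therefore the two chains have coalesced by the random time $T$, the first time at which every index in $\{1,\dots,N\}$ has been picked at least once.

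Finally I would invoke the coupon-collector tail bound: by a union bound over the $N$ indices, $\Pr(T>t)\le \sum_{j=1}^N(1-1/N)^t \le N e^{-t/N}$, which is at most $\epsilon$ as soon as $t\ge N\log N + N\log(1/\epsilon)$. Combining this with the coupling inequality $\|P^t(\tilde\xb,\cdot)-\pi\|_{TV}\le\Pr(T>t)$ and maximizing over the arbitrary starting state $\tilde\xb$ yields $t_{mix}(\epsilon)\le N\log N + N\log(1/\epsilon)$, with no dependence on $\alpha$ or $\SNR$, as claimed. The only real content is the decoupling identity of the first paragraph — once it is in place, the rest is the textbook coupon-collector plus coupling estimate — so the step I would be most careful about is verifying that identity and noting that it uses nothing beyond column-orthogonality and $\omega^2=1$ (so it would in fact persist for any constellation symmetric about the origin).
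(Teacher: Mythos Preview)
Your proof is correct and follows essentially the same route as the paper: orthogonality of the columns makes the single-coordinate update probability state-independent, after which a standard coupling reduces the mixing time to the coupon-collector tail bound $\Pr(T>t)\le N e^{-t/N}$. Your derivation is in fact more explicit than the paper's in verifying the decoupling identity; the only quibble is the closing parenthetical---what you actually use is that $\omega^2$ is constant across the constellation (i.e.\ constant modulus), not merely symmetry about the origin.
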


This theorem is an extension of the mixing time for regular random walks on an $N$-dimensional hypercube \cite{Levin}. The only difference here is that the transition probability follows (\ref{Eq:Prob_of_symbol_MCMC}) and that the transition probability depends on $\SNR$. Under orthogonal columns, the ILS problem has no local minimum, since $\Hb^{T} \Hb$ is a diagonal matrix in the expansion of $\|\yb-\acute{\Hb}\xb\|_2^2$.

\begin{proof}
When the $j$-th index was selected for updating in the MCMC detector, since the columns of ${\acute{\Hb}}$ are orthogonal to each other, the probability of updating $\xb_{j}$ to $-1$ is $\frac{1}{1+e^{\frac{2\yb^{T} {\bf{h}}_{j}}{\alpha^2}}}$. We note that this probability is independent of the current state of Markov chain $\hat{\xb}$. So we can use the classical coupling idea to get an upper bound on the mixing time of this Markov chain.

Consider two separate Markov chains starting at two different states $\xb_{1}$ and $\xb_{2}$. These two chains follow the same update rule according to (\ref{Eq:Prob_of_symbol_MCMC}). By using the same random source, in each step they select the same position index for updating, and they update that position to the same symbol. Let $\tau_{couple}$ be the first time the two chains come to the same state. Then by a classical result, the total variation distance
\begin{equation}
\label{Eq:TV_coupling}
d(t)=\max_{\tilde{\xb}} \|P^{t}(\tilde{\xb},\cdot)-\mathbf{\pi}\|_{TV} \leq \max_{\xb_{1},\xb_{2}} p_{\xb_1,\xb2} \{\tau_{couple}>t\}.
\end{equation}
Note that the coupling time is just time for collecting all of the positions where $\xb_1$ and $\xb_2$ differ, as in the coupon collector problem. From the famous coupon collector problem \cite{Levin}, for any $\xb_1$ and $\xb_2$,
\begin{equation}
\label{Eq:TV_coupling2}
d(N \log(N)+cN)\leq  p_{\xb_1,\xb_2} \{\tau_{couple}>N \log(N)+cN \} \leq e^{-c}.
\end{equation}
So the conclusion follows.
\end{proof} 
\section{Mixing Time with local Minima}
\label{sec:mixing_time_local}
In this section, we consider the mixing time for ILS problems which have local minima besides the global minimum point. Our main results are that local minima greatly affect the mixing time of MCMC detectors, and rigorous statements are given in Theorem \ref{thm:mixing_scale_alpha}. First, we give the definition of a local minimum.

\begin{definition}
A local minimum $\tilde{\xb}$ is a state such that $\tilde{\xb}$ is not a global minimizer for $\min_{\sb \in \{-1,+1\}^N}\|\yb-{\acute{\Hb}}\sb\|^2$; and any of its neighbors which differ from $\tilde{\xb}$ in only one position index, denoted by $\tilde{\xb}'$, satisfies $\|\yb-{\acute{\Hb}}\tilde{\xb}'\|^2>\|\yb-{\acute{\Hb}}\tilde{\xb}\|^2$.
\end{definition}

We will use the following theorem about the spectral gap of Markov chain to evaluate the mixing time. 
\begin{theorem}[Jerrum and Sinclair (1989) \cite{JerrumSinclair}, Lawler and Sokal (1988) \cite{Lawler}, \cite{Levin}] Let $\lambda_2$ be the second largest eigenvalue of a reversible transition matrix $P$, and let $\gamma=1-\lambda_2$. Then
\begin{equation*}
                \frac{\Phi_{*}^2}{2}\leq \gamma \leq 2\Phi_{*},
\end{equation*}
where $\Phi_{*}$ is the bottleneck ratio (also called conductance, Cheeger constant, and isoperimetric constant) defined as
\begin{equation*}
\Phi_{*}=\min_{\pi(S) \leq \frac{1}{2}} \frac{Q(S,S^{c})}{\pi(S)}.
\end{equation*}
Here $S$ is any subset of the state spaces with stationary measure no bigger than $\frac{1}{2}$, $S^{c}$ is its complement set, and $Q(S,S^c)$ is the probability of moving from $S$ to $S^c$ in one step when starting with the stationary distribution.
\label{thm:gap_bottle}
\end{theorem}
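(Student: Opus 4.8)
The plan is to reproduce the standard proof of this Cheeger-type inequality for reversible chains, treating the two bounds separately. Throughout, write $\mathcal{E}(f,f)=\tfrac12\sum_{x,y}\pi(x)P(x,y)(f(x)-f(y))^2$ for the Dirichlet form and $\mathrm{Var}_\pi(f)=\sum_x\pi(x)f(x)^2-(\sum_x\pi(x)f(x))^2$. Since $P$ is reversible it is self-adjoint on $\ell^2(\pi)$, so $\langle f,(I-P)f\rangle_\pi=\mathcal{E}(f,f)$ and, by Courant--Fischer applied on $\mathbf 1^{\perp}$ (the constant function being the top eigenfunction), $\gamma=\min\{\mathcal{E}(f,f)/\mathrm{Var}_\pi(f):\mathrm{Var}_\pi(f)>0\}$. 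Also recall $Q(S,S^c)=\sum_{x\in S,\,y\in S^c}\pi(x)P(x,y)$.

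\emph{Upper bound $\gamma\le 2\Phi_{*}$.} Let $S$ attain the minimum defining $\Phi_{*}$, so $\pi(S)\le\tfrac12$, and test the variational formula with $f=\mathbf 1_S$. Then $\mathcal{E}(\mathbf 1_S,\mathbf 1_S)=Q(S,S^c)$ and $\mathrm{Var}_\pi(\mathbf 1_S)=\pi(S)(1-\pi(S))\ge\tfrac12\pi(S)$ because $1-\pi(S)\ge\tfrac12$. Hence $\gamma\le Q(S,S^c)/(\tfrac12\pi(S))=2\,Q(S,S^c)/\pi(S)=2\Phi_{*}$.

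\emph{Lower bound $\gamma\ge\Phi_{*}^2/2$.} This is the substantive direction. Let $f$ be an eigenfunction of $P$ with eigenvalue $\lambda_2$, and — after replacing $f$ by $-f$ if necessary and choosing a truncation level so that the relevant superlevel set has $\pi$-measure in $(0,\tfrac12]$ — form a nonnegative $g\not\equiv 0$ whose support has $\pi$-measure at most $\tfrac12$. The key ``truncated eigenfunction'' estimate is $\mathcal{E}(g,g)\le\gamma\sum_x\pi(x)g(x)^2$: for each $x$ with $g(x)>0$ one has $g(x)=f(x)$, and since $g\ge f$ pointwise, $((I-P)g)(x)=f(x)-\sum_yP(x,y)g(y)\le f(x)-\lambda_2 f(x)=\gamma g(x)$; multiplying by $\pi(x)g(x)\ge0$ and summing gives $\mathcal{E}(g,g)=\langle g,(I-P)g\rangle_\pi\le\gamma\|g\|_\pi^2$, where $\|g\|_\pi^2:=\sum_x\pi(x)g(x)^2$. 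Next set $B:=\sum_{x,y:\,g(x)>g(y)}\pi(x)P(x,y)\,(g(x)^2-g(y)^2)$. A layer-cake decomposition, $g(x)^2-g(y)^2=\int_0^\infty(\mathbf 1_{g(x)^2>t}-\mathbf 1_{g(y)^2>t})\,dt$, turns $B$ into $\int_0^\infty Q(A_t,A_t^c)\,dt$ with $A_t=\{x:g(x)^2>t\}$; since $A_t\subseteq\mathrm{supp}(g)$ has measure $\le\tfrac12$, $Q(A_t,A_t^c)\ge\Phi_{*}\pi(A_t)$, and integrating yields $B\ge\Phi_{*}\|g\|_\pi^2$. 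On the other hand, writing $g(x)^2-g(y)^2=(g(x)-g(y))(g(x)+g(y))$, applying Cauchy--Schwarz, and then using that the sums over $\{g(x)>g(y)\}$ are each half of the corresponding symmetric sums, together with $(a+b)^2\le2a^2+2b^2$, stationarity, and $\sum_yP(x,y)=1$, one gets $B\le\sqrt{\mathcal{E}(g,g)}\cdot\sqrt{2\|g\|_\pi^2}$. Combining, $\Phi_{*}\|g\|_\pi^2\le B\le\sqrt{2\mathcal{E}(g,g)}\,\|g\|_\pi$ gives $\mathcal{E}(g,g)\ge\tfrac12\Phi_{*}^2\|g\|_\pi^2$, and with $\mathcal{E}(g,g)\le\gamma\|g\|_\pi^2$ we conclude $\gamma\ge\Phi_{*}^2/2$.

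The routine parts are the variational identity and the Cauchy--Schwarz bookkeeping, where tracking the factors of $\tfrac12$ is exactly what produces the clean constant $2$ rather than a worse one. \textbf{The main obstacle} is the truncated-eigenfunction step together with the ``which side is small'' case analysis: one must arrange that the nonnegative $g$ built from the $\lambda_2$-eigenfunction simultaneously (i) has support of $\pi$-measure at most $\tfrac12$, so that every superlevel set $A_t$ is an admissible test set in the definition of $\Phi_{*}$, and (ii) still satisfies the pointwise inequality $((I-P)g)(x)\le\gamma g(x)$ on $\{g>0\}$; handling the set $\{f=0\}$ and the precise choice of truncation level cleanly is the only delicate point.
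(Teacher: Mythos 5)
The paper offers no proof of this statement: it is quoted verbatim as a known result, with the proof deferred to the cited references (Jerrum--Sinclair, Lawler--Sokal, and the Levin--Peres--Wilmer book), and the authors only \emph{apply} it in Theorems \ref{thm:gap_local} and \ref{thm:mixing_scale_alpha}. So there is nothing in the paper to compare your argument against; judged on its own, your proof is correct and is the standard Cheeger-inequality argument for reversible chains. The easy direction (testing the variational characterization of $\gamma$ with $f=\mathbf 1_S$) is fine, and the hard direction is also sound: since the $\lambda_2$-eigenfunction $f$ is $\pi$-orthogonal to constants, at least one of $\{f>0\}$, $\{f<0\}$ is nonempty with $\pi$-measure at most $\tfrac12$, so taking $g=f_+$ after the appropriate sign flip already gives you the support condition (i); the pointwise bound $((I-P)g)(x)\le\gamma g(x)$ on $\{g>0\}$ follows from $g\ge f$ and nonnegativity of $P$ exactly as you write; and the layer-cake plus Cauchy--Schwarz bookkeeping yields the constant $\tfrac12$. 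The only phrase worth tightening is ``choosing a truncation level'': no genuine truncation is needed beyond taking the positive part, since every superlevel set $A_t$ is automatically contained in $\mathrm{supp}(g)=\{f>0\}$ and hence admissible in the definition of $\Phi_{*}$.
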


\begin{theorem}
If there is a local minimum ${\tilde{\xb}}$ in an integer least-squares problem and we denote its neighbor differing only at the $j$-th ($1\leq j \leq N$) location as ${\tilde{\xb}}_j$, then the mixing time of the MCMC detector is at least
\begin{equation}
t_{mix}(\epsilon) \geq \log(\frac{1}{2\epsilon})(\frac{1}{\gamma}-1),
\end{equation}
where
\begin{equation}
\gamma=\sum_{j=1}^{N} \frac{2}{N} {\frac{e^{-\frac{\|\yb-{\acute{\Hb}}{\tilde{\xb}}_{j}\|^2}{2\alpha^2}}}{e^{-\frac{\|\yb-{\acute{\Hb}}{\tilde{\xb}}_{j}\|^2}{2\alpha^2}}+e^{-\frac{\|\yb-{\acute{\Hb}}{\tilde{\xb}}\|^2}{{2\alpha^2}}}} }
\end{equation}

The parameter $\gamma$ is upper bounded by
\begin{equation}
\frac{2}{1+e^{\frac{\min_{j}{\|\yb-{\acute{\Hb}}{\tilde{\xb}}_{j}\|^2}-\|\yb-{\acute{\Hb}}{\tilde{\xb}}\|^2}{2\alpha^2}}}
\end{equation}
\label{thm:gap_local}
\end{theorem}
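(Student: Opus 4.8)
The plan is to lower bound the mixing time by upper bounding the spectral gap $1-\lambda_2$ through the bottleneck ratio of Theorem~\ref{thm:gap_bottle}, taking as the bottleneck set the singleton $S=\{\tilde{\xb}\}$ consisting of the local minimum itself. First I would verify the admissibility condition $\pi(S)\le\frac12$: since $\tilde{\xb}$ is not a global minimizer there is a state $\xb^\star$ with strictly smaller residual, hence strictly larger unnormalized weight $e^{-\|\yb-\acute{\Hb}\xb^\star\|^2/(2\alpha^2)}>e^{-\|\yb-\acute{\Hb}\tilde{\xb}\|^2/(2\alpha^2)}$, so $\pi(\tilde{\xb})<\pi(\xb^\star)$ and therefore $2\pi(\tilde{\xb})<\pi(\tilde{\xb})+\pi(\xb^\star)\le 1$.

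Next I would evaluate $Q(S,S^{c})$. For a singleton, $Q(S,S^{c})=\pi(\tilde{\xb})\bigl(1-P(\tilde{\xb},\tilde{\xb})\bigr)$, and by the update rule of Algorithm~\ref{alg:reversibleMCMC} the chain can leave $\tilde{\xb}$ only by flipping a single coordinate $j$, landing in the neighbor $\tilde{\xb}_j$; by \eqref{Eq:Prob_of_symbol_MCMC} this happens with probability $\frac{1}{N}\cdot\frac{e^{-\|\yb-\acute{\Hb}\tilde{\xb}_j\|^2/(2\alpha^2)}}{e^{-\|\yb-\acute{\Hb}\tilde{\xb}_j\|^2/(2\alpha^2)}+e^{-\|\yb-\acute{\Hb}\tilde{\xb}\|^2/(2\alpha^2)}}$. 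Summing over $j$ gives $1-P(\tilde{\xb},\tilde{\xb})=\sum_{j=1}^{N}\frac{1}{N}(\cdots)$, hence $\Phi_{*}\le Q(S,S^{c})/\pi(S)=\sum_{j=1}^{N}\frac{1}{N}(\cdots)$, and Theorem~\ref{thm:gap_bottle} yields $1-\lambda_2\le 2\Phi_{*}\le\gamma$ with $\gamma$ exactly the quantity in the statement. Because $\tilde{\xb}$ is a local minimum, each summand equals $\frac{1}{1+e^{c_j}}$ with $c_j=(\|\yb-\acute{\Hb}\tilde{\xb}_j\|^2-\|\yb-\acute{\Hb}\tilde{\xb}\|^2)/(2\alpha^2)>0$, so each summand is strictly less than $\frac12$ and thus $\gamma<1$.

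The mixing-time bound then follows from the standard relaxation-time lower bound $t_{mix}(\epsilon)\ge\bigl(\frac{1}{1-\lambda_2}-1\bigr)\log\frac{1}{2\epsilon}$: since $t\mapsto\frac{1}{t}-1$ is decreasing on $(0,1]$ and $1-\lambda_2\le\gamma<1$, we obtain $t_{mix}(\epsilon)\ge\bigl(\frac{1}{\gamma}-1\bigr)\log\frac{1}{2\epsilon}$. For the closed-form upper bound on $\gamma$, note that $\frac{1}{1+e^{c_j}}$ is decreasing in $c_j$, hence maximized when $c_j$ is replaced by $\min_j c_j$; substituting into $\gamma=\sum_{j=1}^{N}\frac{2}{N}(\cdots)$ collapses the sum to $\frac{2}{1+e^{(\min_j\|\yb-\acute{\Hb}\tilde{\xb}_j\|^2-\|\yb-\acute{\Hb}\tilde{\xb}\|^2)/(2\alpha^2)}}$.

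I expect the only delicate points to be (i) identifying the singleton $\{\tilde{\xb}\}$ as the correct bottleneck set and verifying $\pi(\{\tilde{\xb}\})\le\frac12$ from the fact that $\tilde{\xb}$ is not the global optimum, and (ii) invoking the relaxation-time lower bound in a form consistent with using $1-\lambda_2$ (one may pass to the lazy version of the chain, or check that the spectrum stays nonnegative, so that $1-\lambda_2$ is indeed the absolute spectral gap). The remaining work — computing the one-step escape probability from $\tilde{\xb}$ and the monotonicity estimate — is routine.
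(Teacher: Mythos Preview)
Your proposal is correct and follows essentially the same approach as the paper: choose the singleton $S=\{\tilde{\xb}\}$, bound $\Phi_{*}$ via the one-step escape probability using the Glauber update formula, apply $\gamma\le 2\Phi_{*}$ from Theorem~\ref{thm:gap_bottle}, and then invoke the standard relaxation-time lower bound $t_{mix}(\epsilon)\ge(\frac{1}{\gamma}-1)\log\frac{1}{2\epsilon}$. Your treatment is in fact slightly more careful than the paper's, which simply asserts $\pi(S)\le\frac12$ ``since $\tilde{\xb}$ is a local minimum'' and does not spell out the monotonicity argument for the final upper bound on $\gamma$.
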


\begin{proof}
We apply Theorem \ref{thm:gap_bottle} to prove this result. We take a local minimum point ${\tilde{\xb}}$ as the single element in the bottle-neck set $S$. Since ${\tilde{\xb}}$ is a local minimum, $\pi(S) \leq \frac{1}{2}$.
\begin{equation}
Q(S,S^{c})=\frac{\pi(S)}{N}\sum_{j=1}^{N} {\frac{e^{-\frac{\|\yb-{\acute{\Hb}}{\tilde{\xb}}_{j}\|^2}{2\alpha^2}}}{e^{-\frac{\|\yb-{\acute{\Hb}}{\tilde{\xb}}_{j}\|^2}{2\alpha^2}}+e^{-\frac{\|\yb-{\acute{\Hb}}{\tilde{\xb}}\|^2}{{2\alpha^2}}}} }
\end{equation}

Dividing by $\pi(S)$, by the definition of $\Phi_{*}$
\begin{equation}
\Phi_{*}\leq \frac{Q(S,S^{c})}{\pi(S)}=\frac{1}{N}\sum_{j=1}^{N} {\frac{e^{-\frac{\|\yb-{\acute{\Hb}}{\tilde{\xb}}_{j}\|^2}{2\alpha^2}}}{e^{-\frac{\|\yb-{\acute{\Hb}}{\tilde{\xb}}_{j}\|^2}{2\alpha^2}}+e^{-\frac{\|\yb-{\acute{\Hb}}{\tilde{\xb}}\|^2}{{2\alpha^2}}}} }
\end{equation}

So we know $\gamma \leq 2 \frac{1}{N}\sum_{j=1}^{N} {\frac{e^{-\frac{\|\yb-{\acute{\Hb}}{\tilde{\xb}}_{j}\|^2}{2\alpha^2}}}{e^{-\frac{\|\yb-{\acute{\Hb}}{\tilde{\xb}}_{j}\|^2}{2\alpha^2}}+e^{-\frac{\|\yb-{\acute{\Hb}}{\tilde{\xb}}\|^2}{{2\alpha^2}}}} }$.
From a well-known theorem for the relationship between $t_{mix}(\epsilon)$ and $\gamma$:
$t_{mix}(\epsilon) \geq (\frac{1}{\gamma}-1) \log(\frac{1}{2\epsilon})$ \cite{Levin},
our conclusion follows.
\end{proof}

\begin{theorem}
For an integer least-squares problem $\min_{\sb \in \{-1,+1\}^N}\|\yb-{\acute{\Hb}}\sb\|^2$, where $\acute{\Hb}$ is fixed and no two vectors give the same objective distance, the relaxation time (the inverse of the spectral gap) of the Markov chain for the reversible MCMC detector (Algorithm \ref{alg:reversibleMCMC}) is upper bounded by a constant as the temperature $\alpha \rightarrow 0$ if and only if there is no local minimum. Moreover, when there is a local minimum, as $\alpha \rightarrow 0$, the mixing time of Markov chain $t_{mix}(\epsilon) =e^{\Omega(\frac{1}{2\alpha^2})}$.
\label{thm:mixing_scale_alpha}
\end{theorem}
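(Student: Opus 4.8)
The plan is to prove the two implications separately, in both cases via the Cheeger-type sandwich of Theorem~\ref{thm:gap_bottle}, $\Phi_*^2/2 \le \gamma \le 2\Phi_*$, together with the relation $t_{mix}(\epsilon) \ge (1/\gamma - 1)\log(1/(2\epsilon))$ already invoked above.

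The ``if there is a local minimum'' direction is essentially a restatement of Theorem~\ref{thm:gap_local}. If $\tilde{\xb}$ is a local minimum, then because no two vectors have the same objective distance the energy gap $\delta := \min_{j}\|\yb - {\acute{\Hb}}\tilde{\xb}_j\|^2 - \|\yb - {\acute{\Hb}}\tilde{\xb}\|^2$ is strictly positive, and Theorem~\ref{thm:gap_local} gives $\gamma \le 2/(1 + e^{\delta/(2\alpha^2)}) = e^{-\Omega(1/(2\alpha^2))}$. Hence $1/\gamma$ is unbounded as $\alpha \to 0$, and the mixing-time relation yields $t_{mix}(\epsilon) \ge (1/\gamma - 1)\log(1/(2\epsilon)) = e^{\Omega(1/(2\alpha^2))}$, which is simultaneously the ``only if'' part of the equivalence and the ``moreover'' statement.

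The real content is the converse: no local minimum implies $1/\gamma$ stays bounded as $\alpha \to 0$. By $\gamma \ge \Phi_*^2/2$ it is enough to bound $\Phi_*$ below by a constant (depending on $N$ only) for all sufficiently small $\alpha$. I would argue: (i) since the global minimizer $\sb^*$ is unique, $\pi(\sb^*) \to 1$ as $\alpha \to 0$, so for $\alpha$ small enough every set $S$ with $\pi(S) \le 1/2$ excludes $\sb^*$; (ii) let $\tilde{\xb}$ be the $\pi$-heaviest (equivalently, smallest-objective) state in such an $S$; since $\tilde{\xb} \ne \sb^*$ and there is no local minimum, $\tilde{\xb}$ has a neighbor $\tilde{\xb}'$ with strictly smaller objective, and because $\tilde{\xb}$ minimized the objective over $S$ this forces $\tilde{\xb}' \in S^c$; (iii) the single-flip move from $\tilde{\xb}$ to $\tilde{\xb}'$ has probability $P(\tilde{\xb},\tilde{\xb}') > 1/(2N)$, since conditioned on picking the differing coordinate (probability $1/N$) the Glauber rule \eqref{Eq:Prob_of_symbol_MCMC} assigns probability exceeding $1/2$ to the strictly lower-energy symbol, \emph{regardless of} $\alpha$. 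Combining, $Q(S,S^c) \ge \pi(\tilde{\xb})P(\tilde{\xb},\tilde{\xb}') \ge \pi(\tilde{\xb})/(2N) \ge \pi(S)/(2^{N}\cdot 2N)$, so $\Phi_* \ge 1/(2^{N+1}N)$ and $\gamma \ge 1/(2^{2N+3}N^2)$ — a positive constant independent of $\alpha$.

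I expect the conceptual point behind steps (ii)--(iii) to be the crux: one must see that any bottleneck set necessarily ``leaks'' out through its own best configuration along a guaranteed downhill single flip whose probability is bounded below uniformly in the temperature (Glauber dynamics accepts an improving flip with probability more than one half no matter how small $\alpha$ is). The remaining inequalities, in particular the crude $\pi(\tilde{\xb}) \ge \pi(S)/2^N$, are wasteful in $N$ but harmless since $N$ is fixed here. I would also take care to note that the ``distinct objective distances'' hypothesis is exactly what forbids equal-objective neighbors, so that ``not a local minimum'' genuinely supplies a \emph{strictly} improving neighbor, which both step (ii) and the positivity of $\delta$ rely on.
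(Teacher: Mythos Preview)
Your proof is correct and follows essentially the same conductance argument as the paper: both directions use the Cheeger bounds of Theorem~\ref{thm:gap_bottle}, the local-minimum direction is exactly Theorem~\ref{thm:gap_local}, and for the converse both pick the lowest-objective state $\tilde{\xb}$ of a bottleneck set $S$ and exhibit a downhill single flip into $S^c$. The only notable difference is that where you bound $\pi(\tilde{\xb})/\pi(S) \ge 1/2^N$ crudely, the paper instead observes that this ratio tends to $1$ as $\alpha \to 0$ (the minimum-objective element dominates the Gibbs mass of $S$ once objectives are distinct), yielding the sharper asymptotic estimate $\Phi_* \gtrsim 1/N$ rather than your $1/(2^{N+1}N)$; since $N$ is fixed in the statement, either bound suffices.
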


{\bf Remarks}: For the signal model $\yb = \sqrt{\frac{\SNR}{N}}{\Hb{\xb}} + \upsib $, if $\alpha$ is set equal to the noise variance as in \cite{Wang_Poor_03, Zhu_Farhang_Boroujeny_05}, it is equivalent to setting ``$\alpha \rightarrow 0$'' when $\SNR \rightarrow \infty$, since in Theorem \ref{thm:mixing_scale_alpha} the SNR term is incorporated into $\acute{\Hb}$ and we keep $\acute{\Hb}$ fixed in Theorem \ref{thm:mixing_scale_alpha}.

\begin{proof}
First, when there is a local minimum, from Theorem \ref{thm:gap_local} and Theorem \ref{thm:gap_bottle}, the spectral gap $\gamma$ is lower bounded by
\begin{equation}
\gamma=\frac{2}{N}\sum_{j=1}^{N} {\frac{e^{-\frac{\|\yb-{\acute{\Hb}}{\tilde{\xb}}_{j}\|^2}{2\alpha^2}}}{e^{-\frac{\|\yb-{\acute{\Hb}}{\tilde{\xb}}_{j}\|^2}{2\alpha^2}}+e^{-\frac{\|\yb-{\acute{\Hb}}{\tilde{\xb}}\|^2}{{2\alpha^2}}}} }
\end{equation}

As the temperature $\alpha \rightarrow 0$,  the spectral gap upper bound
\begin{equation}
\frac{2}{1+e^{\frac{\min_{j}{\|\yb-{\acute{\Hb}}{\tilde{\xb}}_{j}\|^2}-\|\yb-{\acute{\Hb}}{\tilde{\xb}}\|^2}{2\alpha^2}}}
\end{equation}
decreases at the speed of $\Theta(e^{-\frac{\min_{j}{\|\yb-{\acute{\Hb}}{\tilde{\xb}}_{j}\|^2}-\|\yb-{\acute{\Hb}}{\tilde{\xb}}\|^2}{2\alpha^2}})$. So the relaxation time of the MCMC is lower bounded by $t_{mix}(\epsilon) =e^{\Omega(\frac{1}{2\alpha^2})}$, which grows unbounded as $\alpha \rightarrow 0$.

Suppose instead that there is no local minimum. We argue that as $\alpha \rightarrow 0$, the spectral gap of this MCMC is lower bounded by some constant independent of $\alpha$. Again, we look at the bottle neck ratio and use Theorem \ref{thm:gap_bottle} to bound the spectral gap.

Consider any set $S$ of sequences which do not include the global minimum point ${\xb^*}$. As $\alpha \rightarrow 0$, the measure of this set of sequences $\pi(S)\leq \frac{1}{2}$. Moreover, as $\alpha \rightarrow 0$, any set $S$ with $\pi(S)\leq \frac{1}{2}$ can not contain the global minimum point ${\xb^*}$. Now we look at the sequence ${\tilde{\xb}}'$ which has the smallest distance $\|\yb-{\acute{\Hb}}{\tilde{\xb}}'\|$ among the set $S$. Since there is no local minimum, ${\tilde{\xb}}'$ must have at least one neighbor ${\tilde{\xb}}''$ in $S^{c}$ which has smaller distance than ${\tilde{\xb}}'$. Otherwise, this would imply ${\tilde{\xb}}'$ is a local minimum.  So
\begin{equation}
Q(S,S^{c}) \geq \pi({\tilde{\xb}}') \times \frac{1}{N} {\frac{e^{-\frac{\|\yb-{\acute{\Hb}}{\tilde{\xb}}''\|^2}{2\alpha^2}}}{e^{-\frac{\|\yb-{\acute{\Hb}}{\tilde{\xb}}''\|^2}{2\alpha^2}}+e^{-\frac{\|\yb-{\acute{\Hb}}{\tilde{\xb}}'\|^2}{{2\alpha^2}}}} }
\end{equation}

As $\alpha \rightarrow 0$, $\frac{\pi({\tilde{\xb}}')}{\pi(S)} \rightarrow 1$. So for a given $\epsilon>0$, as $\alpha \rightarrow 0$
\begin{equation}
   \frac{Q(S,S^{c})}{\pi(S)} \geq \frac{1-\epsilon}{N} {\frac{e^{-\frac{\|\yb-{\acute{\Hb}}{\tilde{\xb}}''\|^2}{2\alpha^2}}}{e^{-\frac{\|\yb-{\acute{\Hb}}{\tilde{\xb}}''\|^2}{2\alpha^2}}+e^{-\frac{\|\yb-{\acute{\Hb}}{\tilde{\xb}}'\|^2}{{2\alpha^2}}}} },
\end{equation}
which approaches $\frac{(1-\epsilon)}{N}$ as $\alpha \rightarrow 0$ because $\|\yb-{\acute{\Hb}}{\tilde{\xb}}''\|^2 < \|\yb-{\acute{\Hb}}{\tilde{\xb}}'\|^2$.

From Theorem \ref{thm:gap_bottle}, the spectral gap $\gamma$ is at least $\frac{(\frac{Q(S,S^{c})}{\pi(S)})^2}{2}$, which is lower bounded by a constant as $\alpha \rightarrow 0$.
\end{proof}

So from the analysis above, the mixing time is closely related to whether there are local minima in the problem. In the next section, we will see there often exist local minima, which implies very slow convergence rate for MCMC when the temperature is kept at the noise level in the high SNR regime. 
\section{Presence of Local Minima}
\label{sec:local_minimum}
We have seen that the mixing time of MCMC detectors are closely related to the existence of local minima. It is natural to ask how often local minima occur in ILS problems. In this section, we derive some results about how many local minima there are in an ILS problem, especially when the $\SNR$ is high.

\begin{theorem}
There can be exponentially many local minima in an integer least-quare problem
\label{thm:manylocalminima}
\end{theorem}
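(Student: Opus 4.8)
The plan is to exhibit an explicit (worst-case, non-random) instance of the integer least-squares problem whose objective has $2^{N/2}-1$ distinct local minima, via a block-diagonal gadget. Expanding the objective gives $\|\yb-\acute{\Hb}\sb\|^2 = \sb^T G \sb - 2 b^T \sb + \|\yb\|^2$ with $G \triangleq \acute{\Hb}^T\acute{\Hb}$ a Gram matrix and $b \triangleq \acute{\Hb}^T\yb$, so it suffices to design $G$ and $b$. First I would construct a $2\times 2$ gadget: a positive definite $G_0\in\Rbb^{2\times 2}$ and a vector $b_0\in\Rbb^2$ such that, over the four points $\sb\in\{-1,+1\}^2$, the quadratic $g(\sb) \triangleq \sb^T G_0 \sb - 2 b_0^T \sb$ has a strict global minimizer at $(+1,+1)$ and a strict local minimizer at $(-1,-1)$, meaning both Hamming-neighbors $(+1,-1),(-1,+1)$ have strictly larger value. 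Writing $g(s_1,s_2) = (G_0)_{11}+(G_0)_{22}+2(G_0)_{12}s_1s_2 - 2(b_0)_1 s_1 - 2(b_0)_2 s_2$ and comparing the four values, these requirements reduce to the strict inequalities $(G_0)_{12}+(b_0)_1<0$, $(G_0)_{12}+(b_0)_2<0$, $(b_0)_1+(b_0)_2>0$, $(G_0)_{12}<(b_0)_1$, $(G_0)_{12}<(b_0)_2$, which are met, for instance, by $G_0 = \left(\begin{smallmatrix} 3 & -2 \\ -2 & 3 \end{smallmatrix}\right)$ (positive definite) and $b_0 = (1,1)^T$. Since $G_0$ is positive definite we may factor $G_0 = \acute{\Hb}_0^T\acute{\Hb}_0$ with $\acute{\Hb}_0$ invertible, and set $\yb_0 \triangleq \acute{\Hb}_0^{-T}b_0$, so that $(\acute{\Hb}_0,\yb_0)$ is a genuine $2$-dimensional ILS instance whose objective differs from $g$ only by the constant $\|\yb_0\|^2$.

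Next I would assemble $m$ independent copies. Put $N = 2m$, $\acute{\Hb} \triangleq \Ib_m \otimes \acute{\Hb}_0$, and $\yb \triangleq \oneb \otimes \yb_0$ where $\oneb\in\Rbb^m$; then $G = \Ib_m\otimes G_0$ is block diagonal and the objective separates, $\|\yb-\acute{\Hb}\sb\|^2 = \sum_{k=1}^m g(\sb^{(k)}) + \text{const}$, where $\sb^{(k)}\in\{-1,+1\}^2$ is the $k$-th coordinate pair of $\sb$. Because flipping a single coordinate of $\sb$ changes exactly one block, any configuration in which every block sits at either the strict global minimizer or the strict local minimizer of $g$ is a local minimum of the full objective, provided at least one block is at the local (non-global) minimizer: flipping any coordinate strictly increases that block's contribution, hence the total, and the configuration is not the unique global minimizer of the sum (which has every block at $(+1,+1)$). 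There are $2^m$ such block-corner configurations; removing the single global minimizer leaves $2^m - 1 = 2^{N/2}-1$ local minima, i.e., exponentially many in $N$.

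If one also wants the genericity hypothesis ``no two vectors give the same objective distance'' that appears elsewhere in the paper, I would replace $b_0$ by a slightly perturbed $b_0' = (1,\,1+\delta)^T$ with $\delta>0$ small (and, if needed, add a generic tiny linear term spread over all $N$ coordinates): all five defining inequalities above are strict, so they persist under a small enough perturbation, while a generic perturbation makes all $2^N$ objective values distinct without destroying the $2^{N/2}-1$ local minima. The only place any care is required is the gadget itself — one must simultaneously ensure that $G_0$ is a legitimate Gram matrix (so the instance is realizable by some matrix $\acute{\Hb}$) and that both the global and the local minimizer of $g$ are strict, so that the separability argument produces honest local minima in the sense of the paper's definition. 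The tensoring and the counting are then routine.
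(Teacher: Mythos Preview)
Your proof is correct and follows essentially the same strategy as the paper: both build the instance from $N/2$ independent $2$-dimensional gadgets, each admitting a global and a strict local minimizer, so that product configurations yield $2^{N/2}-1$ local minima. The only cosmetic difference is that the paper specifies the columns of $\acute{\Hb}$ directly (taking $\hb_{i+N/2}=-(1+\epsilon)\hb_i$ for orthonormal $\hb_1,\dots,\hb_{N/2}$ and $\yb=\acute{\Hb}(-\oneb)$), whereas you work through the Gram matrix and then factor; your $G_0$ is positive definite while the paper's $2\times2$ Gram blocks are rank one, but this has no bearing on the argument or the count.
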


\begin{proof}
See Appendix \ref{subsec:proofmanylocalminima} for a detailed proof.
\end{proof}

Now we study how often we encounter a local minimum in specific ILS problem models. Without loss of generality, we assume that the transmitted sequence is an all $-\oneb$ sequence. We first give the condition for ${\tilde{\xb}}$ to be a local minimum. We assume that ${\tilde{\xb}}$ is a vector which has $k$ `$+1$' over an index set $K$ with $|K|=k$ and $(N-k)$ `$-1$' over the set $\overline{K}=\{1,2,...,N\}\setminus {K}$.
\begin{lemma}
${\tilde{\xb}}$ is a local minimum if and only if ${\tilde{\xb}}$ is not a global minimum; and
\begin{itemize}
\item  $\forall i \in K$,
\begin{eqnarray}
{\hb}_{i}^{T}(\sum_{j \in K}{\hb}_{j}-\frac{\upsib}{2})<\frac{\|{\hb}_{i}\|^2}{2}
\end{eqnarray}
\item  $\forall i \in \overline{K}$,
\begin{eqnarray}
{\hb}_{i}^{T}(\sum_{j \in K}{\hb}_{j}-\frac{\upsib}{2})>-\frac{\|{\hb}_{i}\|^2}{2}.
\end{eqnarray}

\end{itemize}

\label{lemma:localcondition}
\end{lemma}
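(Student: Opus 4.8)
The plan is to compare the objective value at $\tilde{\xb}$ directly with that at each of its $N$ single-coordinate-flip neighbors and read off the two families of inequalities; the extra clause "$\tilde{\xb}$ is not a global minimum" is inherited verbatim from the definition of local minimum and requires no work. First I would fix notation so that all the data collapses into a single residual vector. Since the all-$(-\oneb)$ vector is transmitted, $\yb = \upsib - {\acute{\Hb}}\oneb = \upsib - \sum_{j=1}^{N}\hb_j$, where $\hb_j$ is the $j$-th column of ${\acute{\Hb}}$. Writing $\tilde{\xb}$ as $+1$ on $K$ and $-1$ on $\ovr{K}$ gives ${\acute{\Hb}}\tilde{\xb} = 2\sum_{j\in K}\hb_j - \sum_{j=1}^{N}\hb_j$, hence
\[
\yb-{\acute{\Hb}}\tilde{\xb} = \upsib - 2\sum_{j\in K}\hb_j =: \vb .
\]
This is the key simplification: the entire comparison is now about perturbations of $\vb$ by $\pm 2\hb_i$.

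Next I would treat the two types of neighbor. If $i\in K$, flipping coordinate $i$ deletes $i$ from $K$, so the new residual is $\vb + 2\hb_i$ and
\[
\bigl\|\yb-{\acute{\Hb}}\tilde{\xb}'\bigr\|^2 - \bigl\|\yb-{\acute{\Hb}}\tilde{\xb}\bigr\|^2 = 4\hb_i^T\vb + 4\|\hb_i\|^2 .
\]
Demanding this be strictly positive and substituting $\vb = \upsib - 2\sum_{j\in K}\hb_j$ gives $\hb_i^T\bigl(\upsib - 2\sum_{j\in K}\hb_j\bigr) > -\|\hb_i\|^2$; dividing by $-2$ produces exactly the first displayed inequality, $\hb_i^T\bigl(\sum_{j\in K}\hb_j - \tfrac{\upsib}{2}\bigr) < \tfrac{\|\hb_i\|^2}{2}$. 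Symmetrically, if $i\in\ovr{K}$, flipping $i$ adjoins it to $K$, the new residual is $\vb - 2\hb_i$, and the objective gap equals $-4\hb_i^T\vb + 4\|\hb_i\|^2$; positivity together with the inequality reversal caused by dividing by $-2$ yields the second displayed inequality. Since the two cases together exhaust all $N$ one-coordinate neighbors, the conjunction of these inequalities is equivalent to "every neighbor of $\tilde{\xb}$ has strictly larger objective," which, combined with non-optimality, is precisely the definition of a local minimum.

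There is essentially no obstacle: the argument is a one-line completion of the square per coordinate. The only points needing care are (a) correctly using that the $\hb_j$ are the columns of the SNR-scaled matrix ${\acute{\Hb}}$, so that $\yb = \upsib - \sum_j \hb_j$ and the residual at $\tilde{\xb}$ is $\upsib - 2\sum_{j\in K}\hb_j$; and (b) the sign flip when dividing by the negative constant $-2$, which is exactly what makes the inequality "$<$" for $i\in K$ and "$>$" for $i\in\ovr{K}$.
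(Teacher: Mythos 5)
Your proof is correct and follows essentially the same route as the paper: both reduce the comparison with each single-flip neighbor to expanding $\|\vb \pm 2\hb_i\|^2 - \|\vb\|^2$ with $\vb = \upsib - 2\sum_{j\in K}\hb_j$ and rearranging, your version merely packaging the residual a bit more cleanly before the per-coordinate computation. The sign bookkeeping in both cases checks out, so there is nothing to add.
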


\begin{proof}
For a position $i \in K$, when we flip ${\tilde{\xb}}_{i}$ to $1$,  $\|\yb-{\acute{\Hb}}{\tilde{\xb}}'\|^2$ is increased, namely,
\begin{eqnarray}
&&\|\yb-{\acute{\Hb}}{\tilde{\xb}}\|^2-\|\yb-{\acute{\Hb}}{\tilde{\xb}}_{\sim i}\|^2 \nonumber \\
&=& \|-2\sum_{j \in K}{\hb}_{j}+\upsib\|^2-\|-2\sum_{j \in K, j \neq i}{\hb}_{j}+\upsib\|^2 \nonumber \\
&=& 4\|{\hb}_{i}\|^2+4{\hb}_{i}^{T}(2\sum_{j \in K, j \neq i}{\hb}_{j}-\upsib)  \nonumber\\
&<&0,
\end{eqnarray}
where $\tilde{\xb}_{\sim i}$ is a neighbor of $\tilde{\xb}$ by changing index $i$.
This means
\begin{eqnarray}
{\hb}_{i}^{T}(\sum_{j \in K}{\hb}_{j}-\frac{\upsib}{2})<\frac{\|{\hb}_{i}\|^2}{2}.
\end{eqnarray}

For a position $i \in \overline{K}$, when we flip ${\tilde{\xb}}_{i}$ to $-1$,  $\|\yb-{\hb}{\tilde{\xb}}'\|^2$ is also increased, namely,
\begin{eqnarray}
&&\|\yb-{\acute{\Hb}}{\tilde{\xb}}\|^2-\|\yb-{\acute{\Hb}}{\tilde{\xb}}_{\sim i}\|^2 \nonumber \\
&=& \|-2\sum_{j \in K}{\hb}_{j}+\upsib\|^2-\|-2\sum_{j \in K}{\hb}_{j}-2{\hb}_{i}+\upsib\|^2 \nonumber \\
&=& -4\|{\hb}_{i}\|^2+4{\hb}_{i}^{T}(-2\sum_{j \in K}{\hb}_{j}+\upsib)  \nonumber\\
&<&0.
\end{eqnarray}
This means
\begin{eqnarray}
({\hb}_{i})^{T}(\sum_{j \in K}{\hb}_{j}-\frac{\upsib}{2})>-\frac{\|{\hb}_{i}\|^2}{2}.
\end{eqnarray}
\end{proof}

%\begin{lemma}
%One sufficient condition for ${\tilde{\xb}}$ to be a local minimum is that
%\begin{eqnarray}
%\|\sum_{j \in K}{\hb}_{j} -\frac{\upsib}{2}\| < \min_{i} \frac{\|{\hb}_{i}\|}{2}.
%\end{eqnarray}
%
%\end{lemma}
%
%\begin{proof}
%This follows from  $|{\hb}_{i}^T (\sum_{j \in K}{\hb}_{j} -\frac{\upsib}{2})| \leq \frac{\|{\hb}_{i}\|^2}{2}$.
%\end{proof}

It is not hard to see that when $\SNR \rightarrow \infty$, $\upsib$ is comparatively small with high probability, so we have the following lemma.
\begin{lemma}
When $SNR \rightarrow \infty$, ${\tilde{\xb}}$ is a local minimum with high probability, if and only if ${\tilde{\xb}} \neq -\mathbf{1}$; and
\begin{itemize}
\item  $\forall i \in K$,
\begin{eqnarray}
{\hb}_{i}^{T}(\sum_{j \in K}{\hb}_{j})<\frac{\|{\hb}_{i}\|^2}{2}
\end{eqnarray}
\item  $\forall i \in \overline{K}$,
\begin{eqnarray}
{\hb}_{i}^{T}(\sum_{j \in K}{\hb}_{j})>-\frac{\|{\hb}_{i}\|^2}{2}.
\end{eqnarray}

\end{itemize}

\end{lemma}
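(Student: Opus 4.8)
The plan is to derive this lemma as the high-$\SNR$ limit of the exact characterization in Lemma \ref{lemma:localcondition}, regarding the channel and noise realization as fixed while only the scale factor $\SNR/N$ grows. I would write the $i$-th column of $\acute{\Hb}$ as $\hb_i=\sqrt{\SNR/N}\,\mathbf{g}_i$, where $\mathbf{g}_1,\dots,\mathbf{g}_N$ are the fixed i.i.d.\ $\Nc(\zerob,\Ib_N)$ columns of $\Hb$, and keep $\upsib$ fixed as well; only $\SNR$ tends to infinity. Dividing each of the two inequalities of Lemma \ref{lemma:localcondition} by the positive quantity $\SNR/N$ turns the one for $i\in K$ into
\[
\mathbf{g}_i^{T}\Big(\sum_{j\in K}\mathbf{g}_j\Big)-\tfrac12\sqrt{\tfrac{N}{\SNR}}\,\mathbf{g}_i^{T}\upsib<\tfrac12\|\mathbf{g}_i\|^2 ,
\]
and similarly, with the reversed inequality, for $i\in\overline{K}$. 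As $\SNR\to\infty$ the noise term $\tfrac12\sqrt{N/\SNR}\,\mathbf{g}_i^{T}\upsib$ vanishes, so each inequality converges to its noiseless version $\mathbf{g}_i^{T}(\sum_{j\in K}\mathbf{g}_j)\lessgtr\tfrac12\|\mathbf{g}_i\|^2$, which, being scale-invariant in $\SNR/N$, is exactly the condition stated in the lemma written with $\hb_i$.

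The heart of the argument is a general-position observation. For almost every realization $(\mathbf{g}_1,\dots,\mathbf{g}_N,\upsib)$ none of the $2N$ noiseless inequalities is tight: each equality $\mathbf{g}_i^{T}\sum_{j\in K}\mathbf{g}_j=\tfrac12\|\mathbf{g}_i\|^2$ is the zero set of a polynomial in the Gaussian entries that is not identically zero (it is already non-constant in the coordinates of $\mathbf{g}_i$), hence has probability zero, and likewise $\sum_{j\in K}\mathbf{g}_j\neq\zerob$ a.s.\ when $K\neq\emptyset$. On this full-measure event there is a realization-dependent threshold $\SNR_0$ such that for all $\SNR>\SNR_0$ the sign of each perturbed inequality agrees with the sign of its noiseless limit; hence for such $\SNR$ the analytic conditions of Lemma \ref{lemma:localcondition} hold if and only if the noiseless conditions of the present lemma hold. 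Consequently the probability that the two sets of conditions disagree tends to $0$ as $\SNR\to\infty$.

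It remains to justify replacing ``$\tilde{\xb}$ is not a global minimum'' by ``$\tilde{\xb}\neq-\oneb$''. Since $\yb=\upsib-\acute{\Hb}\oneb$, we have $\|\yb-\acute{\Hb}(-\oneb)\|^2=\|\upsib\|^2$, whereas $\|\yb-\acute{\Hb}\tilde{\xb}\|^2=\|\upsib-2\sum_{j\in K}\hb_j\|^2\geq(2\|\sum_{j\in K}\hb_j\|-\|\upsib\|)^2$. When $\tilde{\xb}\neq-\oneb$, i.e.\ $K\neq\emptyset$, the vector $\sum_{j\in K}\hb_j$ is a.s.\ nonzero with $\|\sum_{j\in K}\hb_j\|^2=\Theta(\SNR)$, so this lower bound eventually exceeds $\|\upsib\|^2$; hence with probability $\to1$ the point $-\oneb$ (and not $\tilde{\xb}$) is the global minimum, so $\tilde{\xb}$ is not a global minimum. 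Conversely, if $\tilde{\xb}=-\oneb$ then $K=\emptyset$ and the analytic inequalities are vacuously or trivially true, which is why the clause $\tilde{\xb}\neq-\oneb$ cannot be dropped. Combining the three parts gives the stated equivalence up to an event whose probability tends to $0$.

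\textbf{Main obstacle.} The only delicate point is the uniformity in the general-position step: the threshold $\SNR_0$ depends on the realization, so the conclusion must be phrased as a convergence of probabilities --- for a.e.\ realization the conditions eventually agree for all large $\SNR$, whence $\mathrm{Prob}(\text{they agree})\to1$ --- rather than as a statement for a single fixed large $\SNR$. Verifying that the relevant equality loci are genuine null sets (the defining polynomials are not identically zero) is routine.
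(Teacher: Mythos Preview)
Your proposal is correct and follows the same approach the paper takes: the paper does not give a formal proof, only the one-line remark ``It is not hard to see that when $\SNR\rightarrow\infty$, $\upsib$ is comparatively small with high probability,'' and then states the lemma as an immediate consequence of Lemma~\ref{lemma:localcondition}. Your argument is a rigorous fleshing-out of precisely that idea---rescaling the inequalities of Lemma~\ref{lemma:localcondition} so the noise term vanishes---and you additionally supply two points the paper leaves implicit: the general-position argument ensuring none of the limiting inequalities is tight, and the justification that ``not a global minimum'' reduces to ``$\tilde{\xb}\neq-\oneb$'' with high probability.
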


We now set out to investigate the chance of having a local minimum in MIMO systems. 

\begin{theorem}
Consider a $2 \times 2$ matrix ${\acute{\Hb}}$ whose two columns are uniform randomly sampled from the unit-normed $2$-dimensional vector.
When $\upsib=0$, the probability of there existing a local minimum for such an ${\acute{\Hb}}$ is $\frac{1}{3}$.
\label{thm:22indcolumns}
\end{theorem}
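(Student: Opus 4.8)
The plan is to collapse the event ``there exists a local minimum'' into a single explicit geometric inequality on the two columns $\hb_1,\hb_2$ of $\acute{\Hb}$, and then evaluate its probability directly.

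First I would set up the noiseless $N=2$ picture. Since $\upsib=\zerob$ and the transmitted vector is $-\oneb$, we have $\yb=\acute{\Hb}(-\oneb)=-\hb_1-\hb_2$, so $-\oneb$ attains objective value $0$ in $\min_{\sb\in\{-1,+1\}^2}\|\yb-\acute{\Hb}\sb\|^2$; because $\|\hb_1\|=\|\hb_2\|=1$ the only way another vector could also attain $0$ is a degenerate configuration such as $\hb_2=-\hb_1$, which happens with probability $0$, so almost surely $-\oneb$ is the unique global minimizer. Among the four points of $\{-1,+1\}^2$, each of $(+1,-1)$ and $(-1,+1)$ has the global optimum $(-1,-1)$ as a one-coordinate neighbor, which has strictly smaller objective; hence by the definition of a local minimum neither can be one. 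Therefore the only candidate is $\tilde{\xb}=(+1,+1)$, and the event in the theorem is exactly the event that $(+1,+1)$ is a local minimum.

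Next I would characterize when $(+1,+1)$ is a local minimum. Applying Lemma~\ref{lemma:localcondition} with $K=\{1,2\}$ and $\upsib=\zerob$ (equivalently, computing directly: the objective at $(+1,+1)$ is $4\|\hb_1+\hb_2\|^2$, while at each of its two neighbors $(-1,+1)$ and $(+1,-1)$ it equals $4\|\hb_i\|^2=4$), the local-minimum conditions read $\hb_i^{T}(\hb_1+\hb_2)<\tfrac12\|\hb_i\|^2$ for $i=1,2$. Using $\|\hb_1\|=\|\hb_2\|=1$, both of these reduce to the single inequality $\hb_1^{T}\hb_2<-\tfrac12$ (equivalently $\|\hb_1+\hb_2\|^2<1$); the extra requirement that $(+1,+1)$ not be the global minimizer is automatic, as noted above.

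Finally I would compute the probability of $\hb_1^{T}\hb_2<-\tfrac12$. Write $\hb_1=(\cos\theta_1,\sin\theta_1)$ and $\hb_2=(\cos\theta_2,\sin\theta_2)$ with $\theta_1,\theta_2$ independent and uniform on $[0,2\pi)$; then $\hb_1^{T}\hb_2=\cos(\theta_1-\theta_2)$ and $\phi:=\theta_1-\theta_2\bmod 2\pi$ is uniform on $[0,2\pi)$. Since $\cos\phi<-\tfrac12$ precisely when $\phi\in(\tfrac{2\pi}{3},\tfrac{4\pi}{3})$, an arc of length $\tfrac{2\pi}{3}$, the probability equals $\tfrac{2\pi/3}{2\pi}=\tfrac13$, as claimed. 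I do not expect a real obstacle here; the only points needing a little care are the almost-sure uniqueness of the global minimizer (discarding the measure-zero degenerate column configurations) and the clean argument that $(+1,-1)$ and $(-1,+1)$ are disqualified simply because they are adjacent to the global optimum.
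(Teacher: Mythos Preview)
Your proof is correct and follows essentially the same argument as the paper: identify $(+1,+1)$ as the only possible local minimum (since the other two non-optimal points are neighbors of the global minimizer), reduce the local-minimum condition via Lemma~\ref{lemma:localcondition} to $\hb_1^{T}\hb_2<-\tfrac12$, and then compute the probability that the angle between two independent uniform unit vectors exceeds $\tfrac{2\pi}{3}$. Your write-up is, if anything, a bit more careful about the almost-sure uniqueness of the global minimizer and the uniformity of the angle difference, but the route is identical.
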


Please see the appendix for its proof. 
%\begin{proof}
%When $\upsib=0$,  clearly ${\tilde{\xb}}=(-1,-1)$ is a global minimum point, not a local minimum point. It is also clear that ${\tilde{\xb}}=(-1,1)$ or ${\tilde{\xb}}=(1,-1)$ can not be a local minimum point since they are neighbors to the global minimum solution. So the only possible local minimum point is ${\tilde{\xb}}=(1,1)$.
%
%From Lemma \ref{lemma:localcondition}, the corresponding necessary and sufficient condition is
%\begin{equation*}
%{\hb}_{1}^{T}{\hb}_{2} < -\frac{\|{\hb}_{1}\|^2}{2}=-\frac{\|{\hb}_{2}\|^2}{2}=-\frac{1}{2}.
%\end{equation*}
%This means the angle $\theta$ between the two 2-dimensional vectors ${\hb}_{1}$ and ${\hb}_{2}$ satisfy $\cos(\theta) <-\frac{1}{2}$. Since ${\hb}_{1}$ and ${\hb}_{2}$ are two independent uniform randomly sampled vector, the chance for that to happen is $\frac{\pi-\arccos{(-\frac{1}{2})}}{\pi}=\frac{1}{3}$.
%\end{proof}

\begin{theorem}
Consider a $2 \times 2$ matrix ${\acute{\Hb}}$ whose elements are independent $\Nc (0,1)$ Gaussian random variables.
When $\upsib=0$, the probability of there existing a local minimum for such an ${\acute{\Hb}}$ is $\frac{1}{3}-\frac{1}{\sqrt{5}}+\frac{2\arctan(\sqrt{\frac{5}{3}})}{\sqrt{5}\pi}$.
\label{thm:22Gaussian}
\end{theorem}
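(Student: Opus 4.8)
The plan is to reduce the event ``there exists a local minimum'' to an explicit inequality between the two columns $\hb_1,\hb_2$ of $\acute\Hb$, and then to evaluate the resulting probability by an elementary change of variables followed by a trigonometric integral. The constant $\tfrac13$ of Theorem~\ref{thm:22indcolumns} will reappear as a boundary term of an integration by parts, and the $\sqrt5$-correction will come from averaging over the ratio of the two column norms.

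First I would apply Lemma~\ref{lemma:localcondition} with $\upsib=0$. Then $\yb=-\acute\Hb\oneb$ has objective value $0$ at $\xb=-\oneb$ and strictly positive value at every other point of $\{-1,+1\}^2$ almost surely, so the candidate local minima are only $(+1,-1)$, $(-1,+1)$ and $(+1,+1)$. For $\tilde\xb=(+1,-1)$ we have $K=\{1\}$, and the condition at $i=1\in K$ in Lemma~\ref{lemma:localcondition} becomes $\hb_1^T\hb_1<\tfrac12\|\hb_1\|^2$, i.e.\ $\|\hb_1\|^2<0$, which never holds; the same argument rules out $(-1,+1)$. Hence the only possibility is $\tilde\xb=(+1,+1)$ with $K=\{1,2\}$, for which Lemma~\ref{lemma:localcondition} reduces exactly to
\[
\hb_1^T\hb_2<-\tfrac12\|\hb_1\|^2\quad\text{and}\quad \hb_1^T\hb_2<-\tfrac12\|\hb_2\|^2 .
\]
Thus the probability in question equals $P\big(\hb_1^T\hb_2<-\tfrac12\max(\|\hb_1\|^2,\|\hb_2\|^2)\big)$; since this event is invariant under scaling of $\acute\Hb$, the $\SNR$ normalization is irrelevant and we may take $\hb_1,\hb_2$ i.i.d.\ $\mathcal N(0,\Ib_2)$.

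Next I would write $\hb_i=r_i(\cos\theta_i,\sin\theta_i)$, so that $r_1,r_2$ are i.i.d.\ Rayleigh with density $re^{-r^2/2}$, the angles $\theta_i$ are i.i.d.\ uniform, and $\phi:=\theta_1-\theta_2$ is uniform on $[0,2\pi)$ and independent of $r_1,r_2$. The condition becomes $\cos\phi<-\tfrac12 u$ with $u:=\max(r_1/r_2,r_2/r_1)\ge1$. Since $\cos\phi$ has CDF $x\mapsto 1-\arccos(x)/\pi$, conditioning on $u$ gives conditional probability $\arccos(u/2)/\pi$ for $1\le u\le2$ and $0$ for $u>2$ (setting $u=1$ recovers the value $\tfrac13$ of Theorem~\ref{thm:22indcolumns}, a useful check). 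Because $v:=r_1^2/r_2^2$ is a ratio of i.i.d.\ exponentials it has density $(1+v)^{-2}$, so $w:=u^2=\max(v,1/v)$ has density $2(1+w)^{-2}$ on $[1,\infty)$; changing variables to $u=\sqrt w$ yields
\[
P=\frac{4}{\pi}\int_1^2\frac{u\arccos(u/2)}{(1+u^2)^2}\,du .
\]

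Finally I would integrate by parts with $\int \frac{u\,du}{(1+u^2)^2}=-\frac{1}{2(1+u^2)}$ and $\frac{d}{du}\arccos(u/2)=-\frac{1}{\sqrt{4-u^2}}$; the $u=2$ endpoint vanishes since $\arccos 1=0$, and the boundary term contributes $\frac{4}{\pi}\cdot\frac{\pi}{12}=\frac13$, leaving $-\frac{2}{\pi}\int_1^2\frac{du}{(1+u^2)\sqrt{4-u^2}}$. The substitution $u=2\sin\theta$ turns this into $-\frac{2}{\pi}\int_{\pi/6}^{\pi/2}\frac{d\theta}{1+4\sin^2\theta}$, and then $s=\tan\theta$ gives the antiderivative $\frac{1}{\sqrt5}\arctan(\sqrt5\,s)$; evaluating from $\pi/6$ to $\pi/2$ produces $-\frac{1}{\sqrt5}+\frac{2\arctan(\sqrt{5/3})}{\sqrt5\,\pi}$, and adding $\frac13$ gives the claimed value. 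The only genuinely delicate steps are the first reduction (that $\upsib=0$ eliminates the two single-flip candidates, leaving $(+1,+1)$ as the sole possible local minimum) and correctly identifying the joint law of $(\phi,u)$; the remaining integral, though the bulk of the computation, is a routine trigonometric substitution.
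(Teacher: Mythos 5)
Your proposal is correct and follows essentially the same route as the paper: reduce to $(+1,+1)$ as the only candidate local minimum via Lemma~\ref{lemma:localcondition}, pass to polar coordinates so the condition becomes $\cos\phi < -\tfrac{1}{2}\max(r_1/r_2,\,r_2/r_1)$, derive the law of the norm ratio (your exponential-ratio argument gives the same tail $2/(t^2+1)$ the paper obtains by direct integration of the Rayleigh densities), and integrate against the conditional angle probability. The only substantive difference is that you carry out the final integral explicitly by parts and trigonometric substitution, a computation the paper states without detail, and your evaluation checks out.
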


Please refer to the appendix for its proof.

For higher dimension $N$, it is hard to directly estimate the probability of a vector being a local minimum based on the conditions in Lemma \ref{lemma:localcondition}. Simulation results instead suggest that for large $N$, with high probability, there exists at least one local minimum. We conjecture this is the case, but proof or disproof of it seems nontrivial.

%The following lemma gives us a sufficient condition. For example, if the sum of $k$ columns has a very small $\ell_2$ norm, that will very likely lead to a local minimum.
%
%\begin{lemma}
%\begin{eqnarray}
%\|\sum_{j \in K}{\hb}_{j} -\frac{\upsib}{2}\| < \min_{i} \frac{\|{\hb}_{i}\|}{2}.
%\end{eqnarray}
%
%\label{lemma:localsufficient}
%\end{lemma}
%
%\begin{proof}
%This follows from  $|{\hb}_{i}^T (\sum_{j \in K}{\hb}_{j} -\frac{\upsib}{2})| < \frac{\|{\hb}_{i}\|^2}{2}$.
%\end{proof}
%
%\begin{theorem}
%Consider an $N \times N$ matrix ${\hb}$ whose $N$ columns are uniform randomly sampled from the unit-normed $N$-dimensional vector.
%When $\upsib=0$, then the expected number of local minima for such an ${\hb}$ is $\Ec(N_{local}) \geq \sum_{k=2}^{N}{\binom{N}{k}}P_{k}$, where $P_{k}$ is the probability that the magnitude of the sum of $k$ uniform randomly sampled vectors is less than $\frac{1}{2}$.
%\end{theorem}
%
%\begin{proof}
%This follows from Lemma \ref{lemma:localcondition} and the fact that there are $\binom{N}{k}$ vectors for ${\tilde{\xb}}$ which have exactly $k$ +1 in it.
%\end{proof}
%
%

\section{Choice of Temperature $\alpha$ in High $\SNR$}
\label{sec:choice_alpha}
In previous sections, we have looked at the mixing time of MCMC for an ILS problem. Now we use the results we have accumulated so far to help choose the appropriate temperature of $\alpha$ to ensure that the MCMC mixes fast and that the optimal solution also comes up fast when the system is in a stationary distribution.

When $\SNR \rightarrow \infty$, the ILS problem will have the same local minima as the case $\upsib=0$. From the derivations and simulations, it is suggested that with high probability there will be at least one local minimum, especially for large problem dimension $N$.

So following from Lemma \ref{thm:mixing_scale_alpha} and the reasoning therein, to ensure there is an upper bound on the mixing time as $\SNR\rightarrow \infty$, the temperature $\alpha$ should at least grow at a rate such that
\begin{equation}
\max_{\tilde{\xb}} \min_{\tilde{\xb}'}\frac{\frac{\SNR}{N} \left(\|-{\acute{\Hb}}\mathbf{1}-\tilde{\xb}'\|^2-\|-\Hb\mathbf{1}-\tilde{\xb}\|^2\right)}{2\alpha^2} \leq C,
\label{eq:criterion}
\end{equation}
where $\tilde{\xb}$ is a local minimum and $\tilde{\xb}'$ is a neighbor of $\tilde{\xb}$, and $C$ is a constant.

This requires that $\alpha^2$ grow as fast as $\Omega(\SNR)$ to ensure fast mixing with the existence of local minima. This explains that if we keep the temperature at the noise level, it will lead to slow convergence in the high SNR regime \cite{Farhang_Boroujeny_06}.

We remark that, for the squared-norm-2 MCMC detector, it is hard to explicitly evaluate $\alpha$ from \eqref{eq:criterion}. However, for the norm-2 MCMC detector, the corresponding criterion for a fast mixing is given by
\begin{equation}
\max_{\tilde{\xb}} \min_{\tilde{\xb}'}\frac{\sqrt{\frac{\SNR}{N}} \left(\|-\Hb\mathbf{1}-\tilde{\xb}'\|_2-\|-\Hb\mathbf{1}-\tilde{\xb}\|_2\right)}{2\alpha^2} \leq C,
\label{eq:criterionfornorm2}
\end{equation}
for some constant $C$.

By the triangular inequality, and the concentration of measure result for Gaussian random variables, with high probability as $N \rightarrow \infty$, for any $\epsilon>0$,
\begin{eqnarray}
&&\max_{\tilde{\xb}} \min_{\tilde{\xb}'}{\sqrt{\frac{\SNR}{N}} \left(\|-\Hb\mathbf{1}-\tilde{\xb}'\|_2-\|-\Hb\mathbf{1}-\tilde{\xb}\|_2\right)} \nonumber\\
&& \leq (1+\epsilon) 2\sqrt{\SNR}.
\label{eq:concentration}
\end{eqnarray}

So for the norm-2 MCMC detector, as long as $\alpha^2 \geq (1+\epsilon)\sqrt{\SNR}/C$, the condition \eqref{eq:criterionfornorm2} will be satisfied with high probability. %This simple observation suggests that the norm-2 MCMC detector might be easier for us to control its mixing time through adjusting the value of $\alpha$. Our simulation results also show that that the norm-2 MCMC detector can perform better in high SNR regimes.

\section{Simulation Results}
\label{sec:sim_results}
In this section we present simulation results for an $N \times N$ MIMO system with a full square channel matrix containing i.i.d. Gaussian entries.
%\subsection{Full rank random square channel matrices}
In Figure \ref{fig:Sim_alpha_SNR=10} and Figure \ref{fig:Sim_alpha_SNR=14} the Bit Error Rate (BER) of the sequential MCMC detector has been evaluated as a function of the number of block iterations in a $10 \times 10$ system using a varity of $\alpha$ values. Thereby, we can inspect how the parameter $\alpha$ affects the convergence rate of the MCMC detector and, as a reference, we have included the values of $\alpha$ computed using \eqref{EQ:Alpha_expression} and \eqref{EQ:Alpha_approx}, which can be seen in Table \ref{tab:Selection_alpha}.
%\begin{table}[!tb]  %Table template.
%    \centering
%    \caption{Theoretical values of $\alpha$ for $N = 10$.}
%    \begin{tabular}[!h]{l c c}
%        \hline \hline
%                \SNR & 10 dB & 14 dB \\
%                \hline
%                %\vspace{1mm}
%         				$\alpha_{+,\zeta=2}$  & 3.5  &  5.6 \\
%         			%	\hline%\vspace*{1mm}
%         				$\alpha_{+,\zeta=1}$	& 3.1  &  5.1 \\
%         			%	\hline%\vspace*{1mm}
%         				$\tilde{\alpha}_{+}$	& 2.7  &  4.6 \\
%                %\vspace*{1mm}
%        \hline
%     \end{tabular}
%     \label{tab:Selection_alpha}
%\end{table}

\begin{table}[!tb]  %Table template.
    \centering
    \caption{Theoretical values of $\alpha$ for $N = 10$.}
    \begin{tabular}[!h]{l c c}
        \hline \hline
                \SNR & 10 dB & 14 dB \\
                \hline
                %\vspace{1mm}
         				$\alpha_{+,\zeta=2}$  & 4.98  &  7.99 \\
         			%	\hline%\vspace*{1mm}
         				$\alpha_{+,\zeta=1}$	& 3.54  &  5.76 \\
         			%	\hline%\vspace*{1mm}
         				$\tilde{\alpha}_{+}$	& 2.74  &  4.56 \\
                %\vspace*{1mm}
        \hline
     \end{tabular}
     \label{tab:Selection_alpha}
\end{table}

The performance of the Maximum Likelihood (ML), the Zero-Forcing (ZF), and the Linear Minimum Mean Square Error (LMMSE) detector have also been plotted, to ease the comparison of the MCMC detector with these detectors (Please see \cite{Agrell_et_al_02}, for example, for descriptions of these well-known detectors). It is seen that the MCMC detector outperforms both the ZF and the LMMSE detector after only a few block iterations in all the presented simulations, when the tuning parameter $\alpha$ is chosen properly. Furthermore, it is observed that the parameter $\alpha$ has a huge influence on the convergence rate and that the MCMC detector converges toward the ML solution as a function of the iterations\footnote{It should be noted that the way we decode the symbol vector to a given iteration, is to select the symbol vector with has the lowest cost function in all the iterations up to that point in time.}.
%%%%%%%%Next figure %%%%%%%%%%%%% =_-1_1__ChType=RFS_N=
\begin{figure}[tb]
\centering
\includegraphics[width=2.6in]{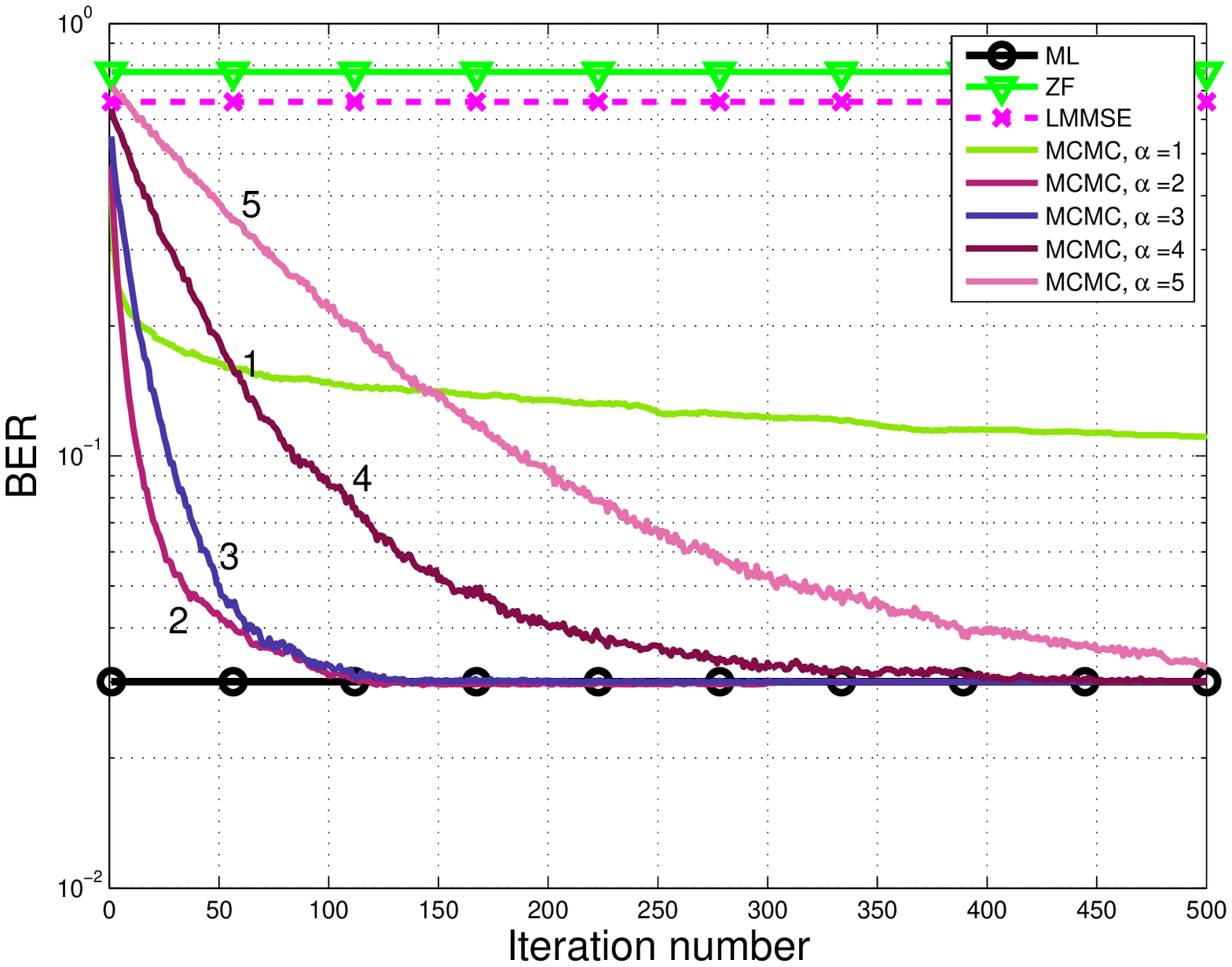}
%\includegraphics[width=0.4\textwidth]{}
%\vspace{-2mm}
\caption{BER vs. iterations, $10 \times 10$. $\SNR$ = 10 dB.}
\label{fig:Sim_alpha_SNR=10}
%\vspace{-3mm}
\end{figure}
%%%%%%%%Next figure %%%%%%%%%%%%%
\begin{figure}[tb]
	\centering
  \includegraphics[width=2.6in]{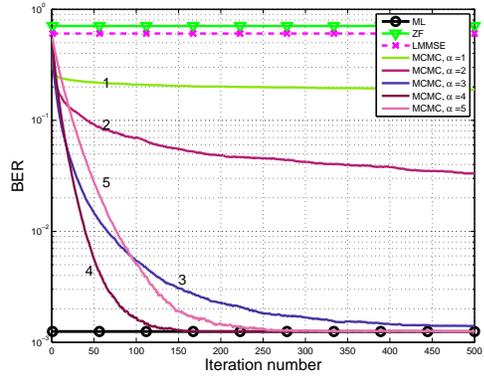}
	\vspace{-2mm}
	\caption{BER vs. iterations, $10 \times 10$ system. $\SNR$ = 14 dB.}
	\label{fig:Sim_alpha_SNR=14}
	%\vspace{-3mm}
\end{figure}
Figure \ref{fig:Sim_alpha_BER_vs_SNR_iter=100} shows the BER performance for the MCMC detector for fixed number of iterations, $k = 100$. From the figure we see that the $\SNR$ has a significant influence on the optimal choice of $\alpha$ given a fixed number of iterations. The performance of the sequential MCMC  detector is also shown for a $50 \times 50$ system, which represents a ML decoding problem of huge complexity where an exhaustive search would require $2^{50} \approx 10^{15}$ evaluations. For this problem even the sphere decoder would have an enormous complexity under moderate $\SNR$, and it has actually been proved in \cite{Ottersten_05} that the complexity of SD for $\SNR = O(\ln(N))$ is exponential. Therefore, it has not been possible to simulate the performance of this decoder within a reasonable time and we have therefore ``cheated" a little by initializing the radius of the sphere to the minimum of either the norm of the transmitted symbol vector or the solution found by the MCMC detector. This has been done in order to evaluate the BER performance of the optimal detector. Figure \ref{fig:Sim_alpha_N=50_SNR=12} shows the BER curve as a function of the iteration number while Figure \ref{fig:Sim_alpha_BER_vs_SNR_N=50_iter=500} illustrates the BER curve vs. the $\SNR$. From Figure \ref{fig:Sim_alpha_N=50_SNR=12} we see that there is a quite good correspondence between the simulated $\alpha$ and the theoretical value $\tilde\alpha_+ = 2.6$ obtained from \eqref{EQ:Alpha_approx}.
%%%%%%%%Next figure %%%%%%%%%%%%%
\begin{figure}[tb]
	\centering
\includegraphics[width=2.6in]{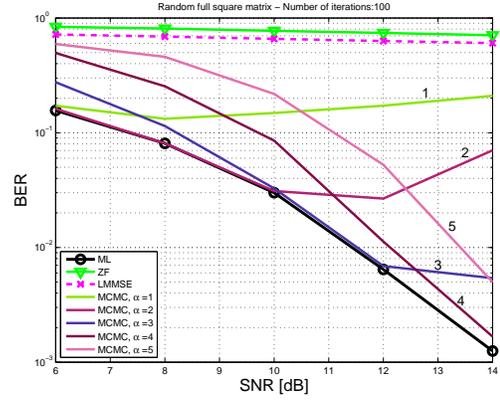}
\vspace{-2mm}
	\caption{BER vs. $\SNR$, $10 \times 10$. Number of iterations, $k = 100$.}
	\label{fig:Sim_alpha_BER_vs_SNR_iter=100}
	%\vspace{-3mm}
\end{figure}
%%%%%%%%Next figure %%%%%%%%%%%%%
\begin{figure}[tb]
	\centering
  \includegraphics[width=2.6in]{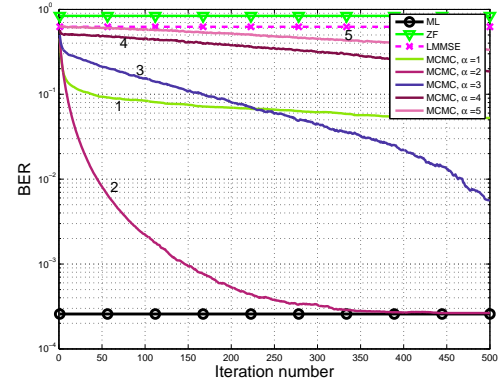}
	\vspace{-2mm}
	\caption{BER vs. iterations, $50 \times 50$ system. $\SNR$ = 12 dB.}
	\label{fig:Sim_alpha_N=50_SNR=12}
	%\vspace{-3mm}
\end{figure}
%%%%%%%%Next figure %%%%%%%%%%%%%
\begin{figure}[tb]
	\centering
\includegraphics[width=2.6in]{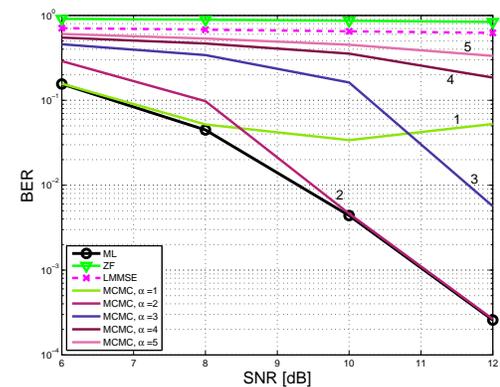}
\vspace{-2mm}
	\caption{BER vs. \SNR, $50 \times 50$ system. Number of iterations, $k = 500$.}
	\label{fig:Sim_alpha_BER_vs_SNR_N=50_iter=500}
	%\vspace{-3mm}
\end{figure}

\begin{figure}[!htb]
  \centering
  \includegraphics[width=200pt, height=120pt]{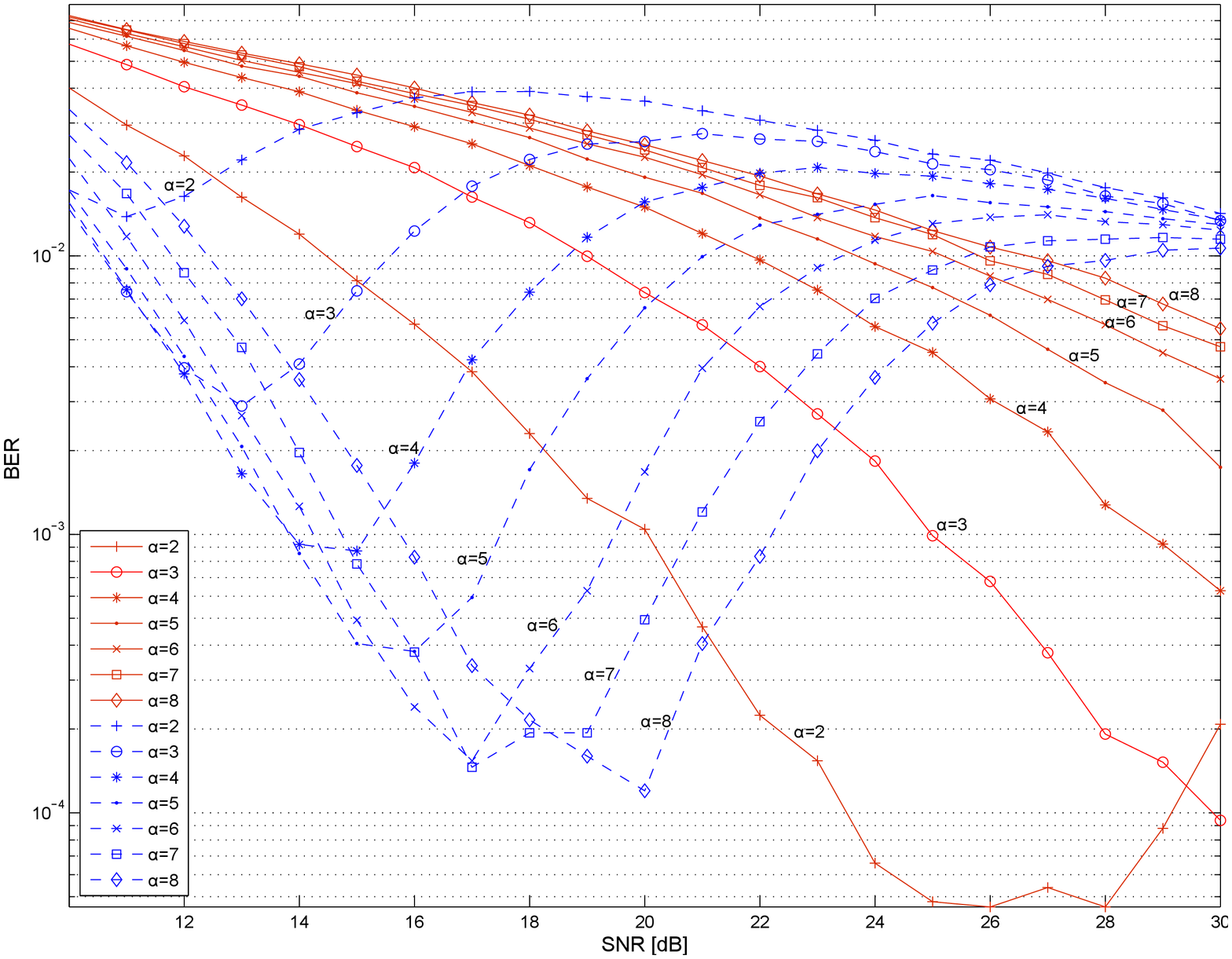}\\
  \caption{BER vs SNR for squared-norm-2 MCMC detector (dashed lines) and norm-2 MCMC detector (solid lines)}\label{1}
\end{figure}

Now we compare the numerical performance of reversible MCMC detectors including square-norm-2 MCMC detector and norm-2 MCMC detector. Again, we simulate $N\times N$ MIMO systems with channel matrices containing zero mean i.i.d Gaussian entries. Figure \ref{1} shows the BER as a function of $\SNR$ for two different MCMC detectors: squared-norm-2 and norm-2 MCMC detector, when $N=10$. 1000 iterations are used in both reversible MCMC detectors which are initialized with a random input vector. Dashed and blue curves in Figure \ref{1} represent squared-norm-2 MCMC detector for various $\alpha$ values. Squared-norm-2 MCMC detector uses $\|\textrm{y}-\sqrt{\frac{\textrm{\SNR}}{\textmd{N}}}\textbf{H}\hat{\sb}\|^{2}$ in the calculation of the probability of transferring from one state to another. Solid and red curves represent norm-2 MCMC detector which uses equation $\|\textrm{y}-\sqrt{\frac{\textrm{\SNR}}{\textmd{N}}}\textbf{H}\hat{\sb}\|$. We can see that for the same $\alpha$ values, norm-2 MCMC detector has better BER compared with the Squared-norm-2 in high SNR.

Now we consider numerical results related to the mixing time of reversible MCMC detectors. In Figure \ref{fig:N_expected}, we plot the expected number of local minima in a system as the problem dimension $N$ grows. For each $N$, we generate $100$ random channel matrices and for each matrix, we examine the number of local minima by exhaustive search. As the problem dimension $N$ grows, the number of local minima grows rapidly.

In Figure \ref{fig:N_frequency}, we plot the probability of there existing a local minimum as the problem dimension $N$ grows. For each $N$, we generated $100$ random channel matrices and for each matrix, we examined whether there exist local minima by exhaustive search. As $N$ grows, the empirical probability of there existing at least one local minimum approaches $1$. It is interesting to see that for $N=2$, our theoretical result $\frac{1}{3}-\frac{1}{\sqrt{5}}+\frac{2\arctan(\sqrt{\frac{5}{3}})}{\sqrt{5}\pi}\approx0.15$ matches well with the simulations.

Figures \ref{2} and \ref{3} show the histograms of the number of local minima for $N=10$ and 12 respectively, under $\SNR=10$. For each parameter $N$, we used exhaustive search to examine the number of local minima in 100 randomly chosen Gaussian channel matrices. Obviously, the average number of local minima increases as $N$ increases, while the frequency of 0 local minima decreases.

%
%We also examine how the spectral gap for MCMC is related to the existence of local minima. For $N=5$ and $SNR=10$, we randomly generated $10$ problem instances and keep the temperature $\alpha^2=1$ the same as the noise variance. Out of the $10$ trials, the number of local minima are  $2,     1,     0,     0,     0,     2,     0,0,     2$ and  $0$. The corresponding spectral gaps are respectively $0.0037,   0.0008,    0.1244,    0.1957,    0.1989$,    $0.0011$,    $0.1698$,    $0.1764$,    $5 \times 10^{-10}$, and $0.1266$. It can be seen that when there exist local minima, the spectral gap is significantly smaller than the cases without local minima. This implies a slower mixing for the systems with local minima, which is consistent with our theoretical results.

%%%%%%%%Next figure %%%%%%%%%%%%%
\begin{figure}[tb]
\centering
\includegraphics[width=3.0in, height=1.75in]{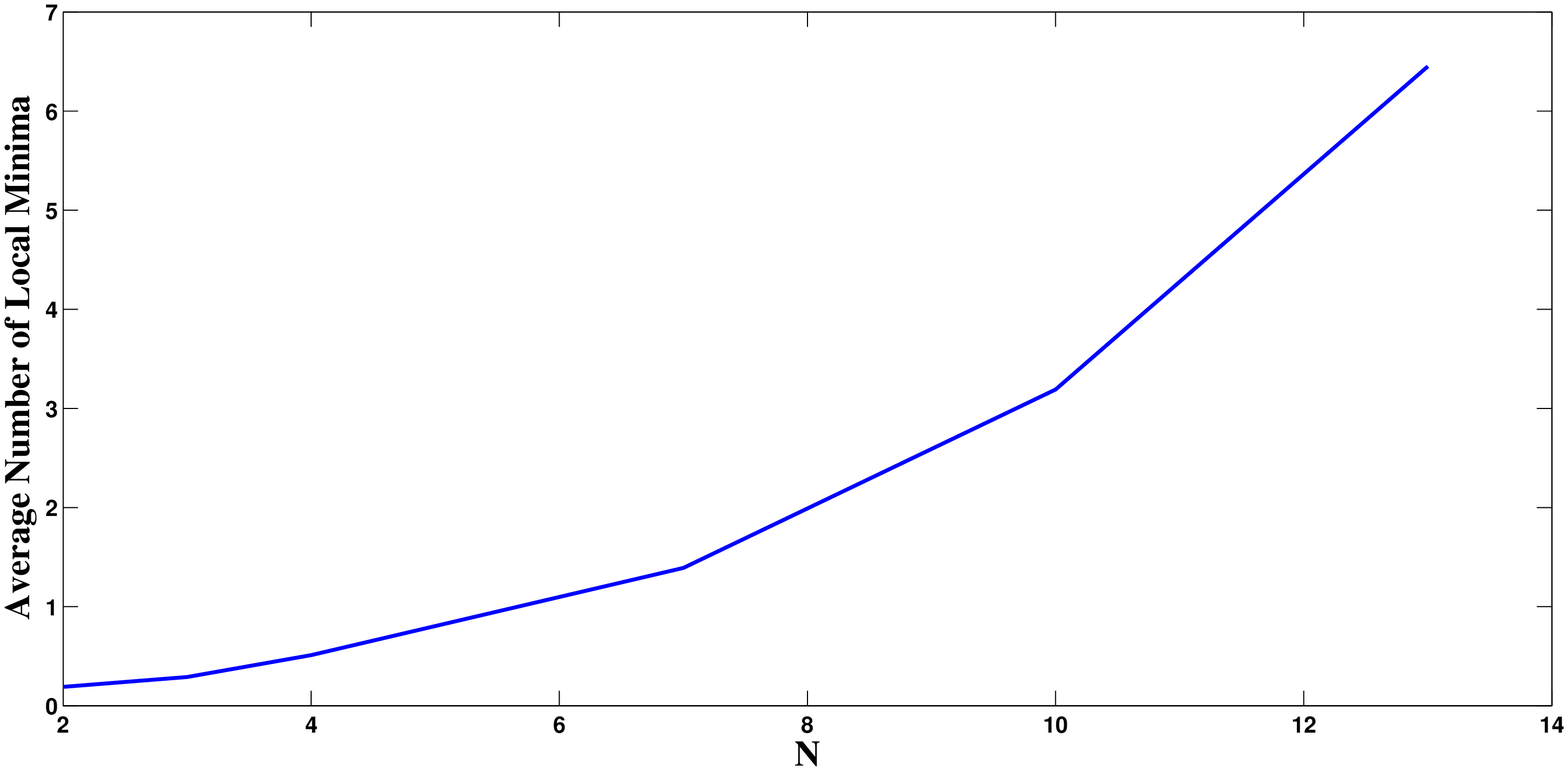}
%\includegraphics[width=0.4\textwidth]{}
%\vspace{-2mm}
\caption{Average number of local minima}
\label{fig:N_expected}
%\vspace{-3mm}
\end{figure}
%%%%%%%%Next figure %%%%%%%%%%%%%
\begin{figure}[tb]
	\centering
  \includegraphics[width=3.0in,height=1.75in]{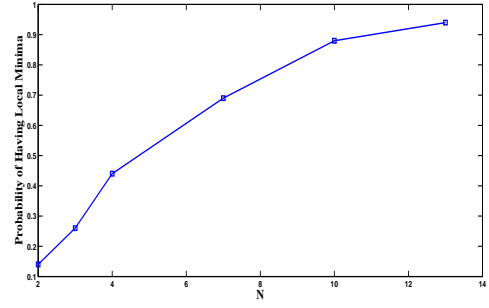}
	\vspace{-2mm}
	\caption{The probability of having local minima}
	\label{fig:N_frequency}
	%\vspace{-3mm}
\end{figure}
%\begin{figure}[!htb]
%  \centering
%  \includegraphics[width=200pt]{Prof.eps}\\
%  \caption{BER vs SNR for Squared-norm-2 MCMC detector in the dash lines and norm-2 MCMC detector in the solid lines}\label{1}
%\end{figure}
%
%Figure \ref{1} shows the Bit Error Rate (BER) as a function of the Signal to Noise Ration (SNR) for two different MCMC detectors: Squared-norm-2 and norm-2 MCMC detector. Dashed and blue curves in Figure \ref{1} represent squared-norm-2 MCMC detector for various $\alpha$ values. Squared-norm-2 MCMC detector uses $\|\textrm{y}-\sqrt{\frac{\textrm{SNR}}{\textmd{N}}}\textbf{H}\hat{s}\|^{2}$ in the calculation of the probability of transferring from one state to another. Solid and red curves represent norm-2 MCMC detector which uses equation $\|\textrm{y}-\sqrt{\frac{\textrm{SNR}}{\textmd{N}}}\textbf{H}\hat{s}\|$. We can see that for the same $\alpha$ values, norm-2 MCMC detector has better BER compared with the Squared-norm-2 in high SNR, which is consistent with our theoretical analysis. 1000 iterations are used in both MCMC detectors which are initialized with a random input vector s, $N=10$.
Figure \ref{4} presents the histograms of the spectral gap when there are 0, 1, 2, and 3 local minima respectively for $N=5$ and $\SNR=10$. We generated $10^{5}$ randomly Gaussian channel matrices. In each matrix we examined the number of local minima and calculated the spectral gap when $\alpha^2=1$. For all these figures, each bar represents the percentage of matrices which fall in a spectral gap interval of 0.01. We can see that, when there is 0 local minimum, around 50 percent of the matrices' spectral gap fall between 0.19 and 0.2, suggesting these MCMC detectors mix fast. However, when there is at least one local minimum, a high percentage of the matrices have spectral gap values between 0 and 0.01. This percentage increases with the increasing of the number of local minima.

\begin{figure}[!htb]
  \centering
  \includegraphics[width=150pt, height=100pt]{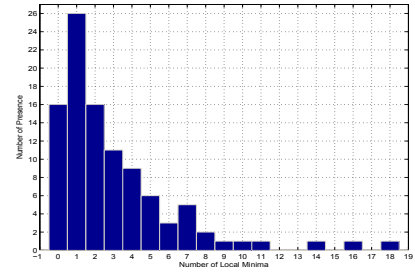}\\
  \caption{Histograms of the number of local minima for N=10}\label{2}
\end{figure}

\begin{figure}[!htb]
  \centering
  \includegraphics[width=150pt, height=100pt]{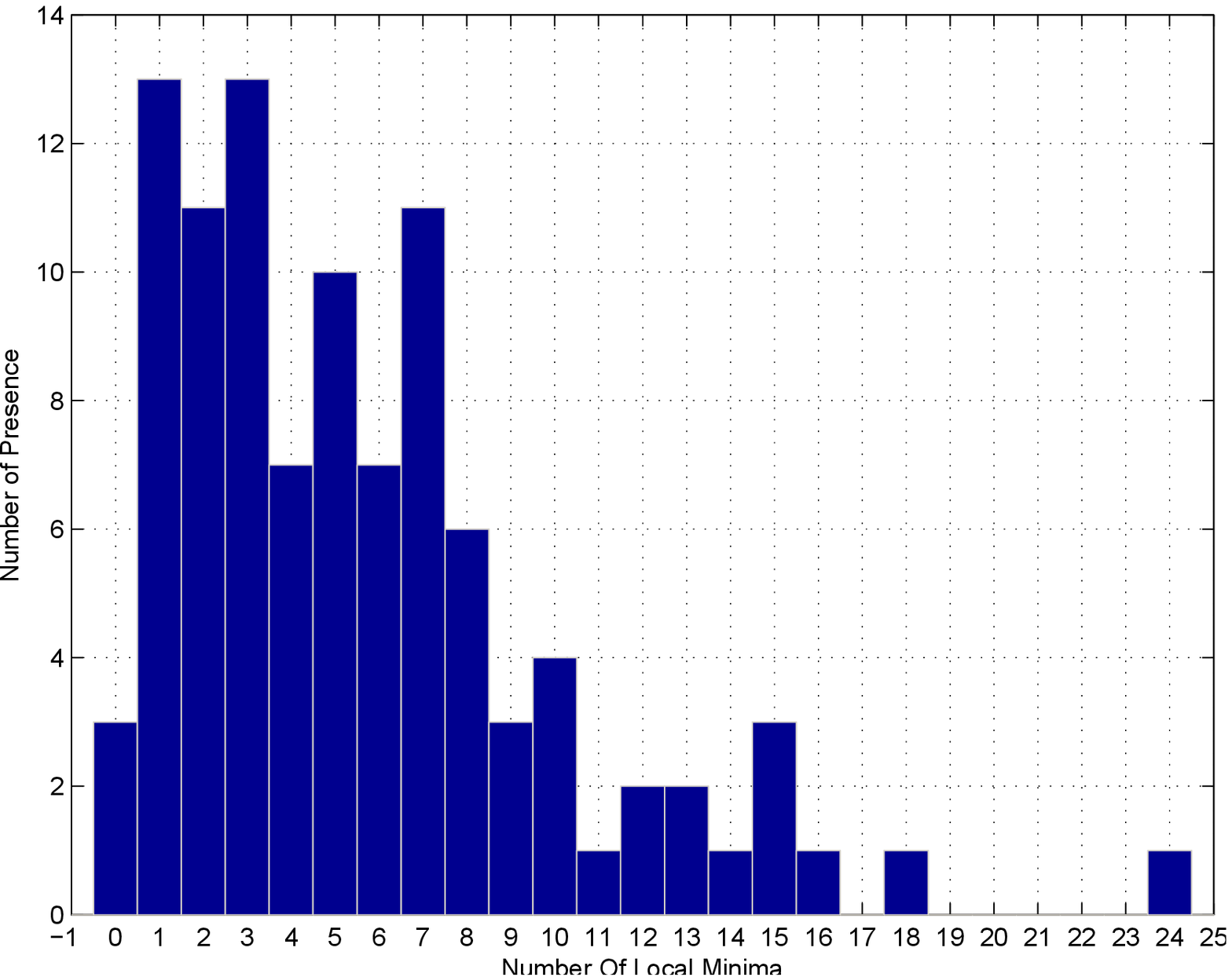}\\
  \caption{Histograms of the number of local minima for N=12}\label{3}
\end{figure}

%\begin{figure}
%
%    \begin{minipage}{.5\linewidth}
%        \centering
%        \subfloat[]{\label{main:a}\includegraphics[scale=.15]{histM=10.eps}}
%        \label{2}
%    \end{minipage}%
%    \begin{minipage}{.5\linewidth}
%        \centering
%        \subfloat[]{\label{main:b}\includegraphics[scale=.15]{histM=12.eps}}
%        \label{3}
%    \end{minipage}\par\medskip
%
%\caption{Histograms of the number of local minima for $N=$ (a) 10 (b) 12.}
%\label{4}
%\end{figure}

\begin{figure}

    \begin{minipage}{.5\linewidth}
        \centering
        \subfloat[]{\label{main:a}\includegraphics[scale=.20]{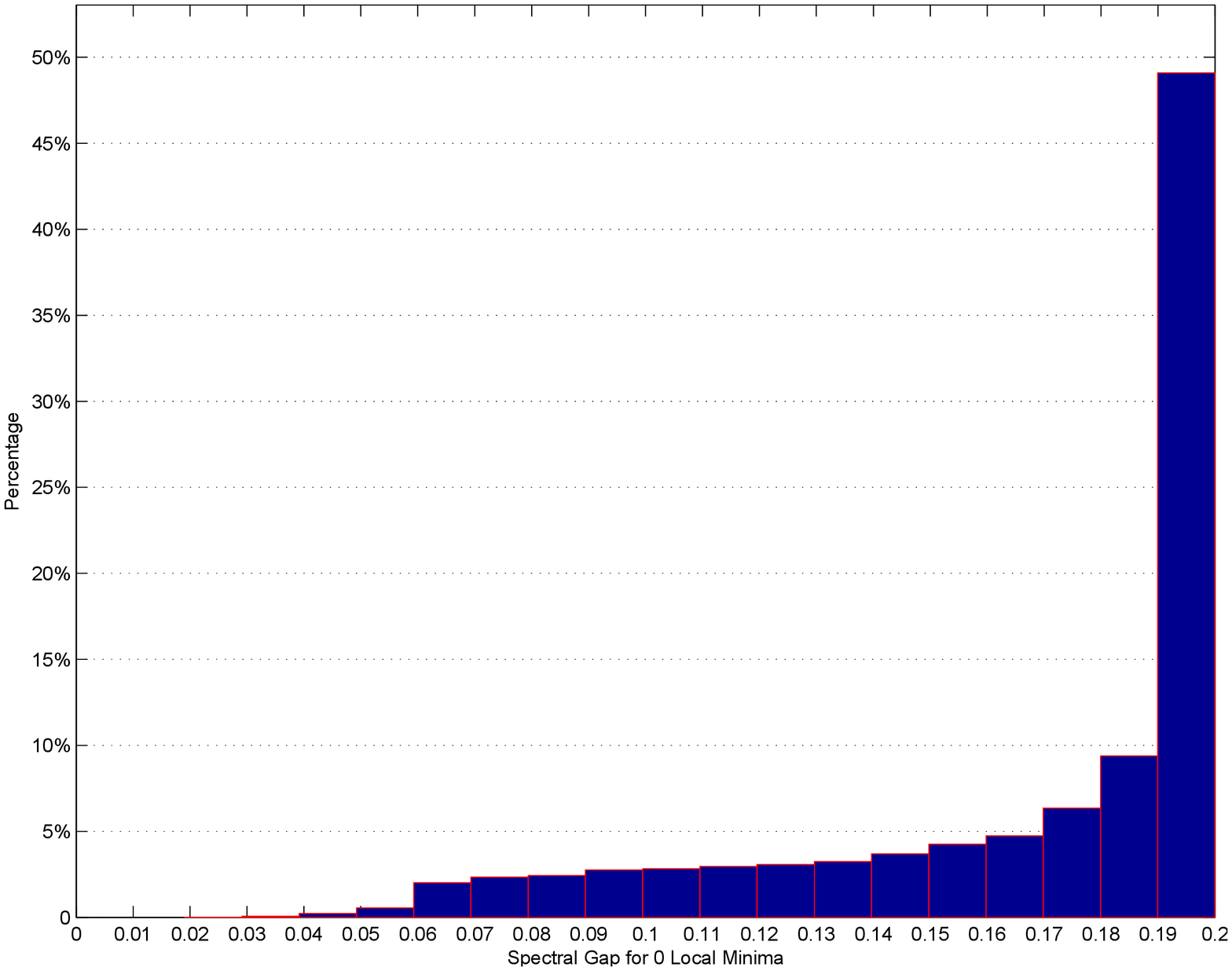}}
    \end{minipage}%
    \begin{minipage}{.5\linewidth}
        \centering
        \subfloat[]{\label{main:b}\includegraphics[scale=.20]{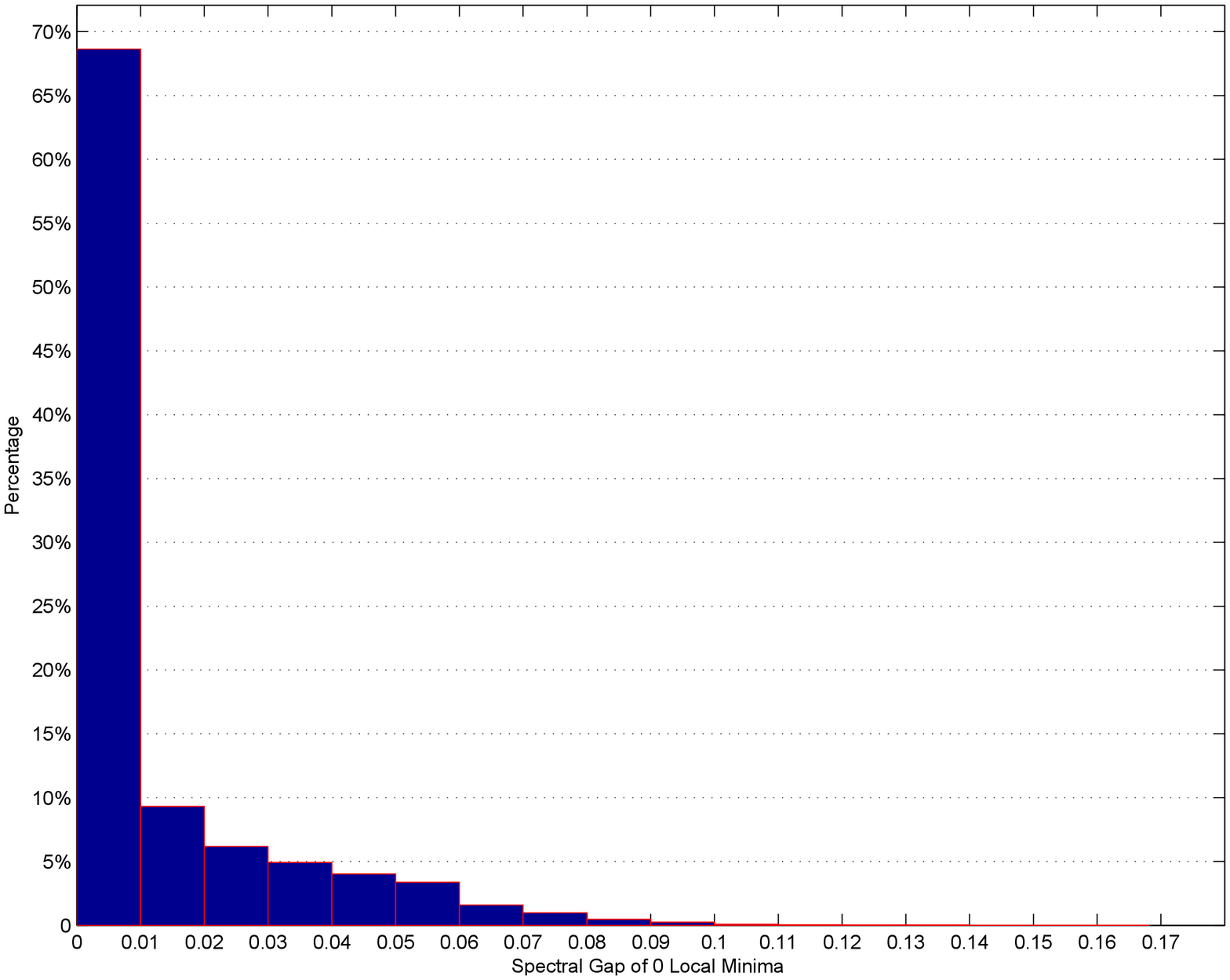}}
    \end{minipage}\par\medskip
    \begin{minipage}{.5\linewidth}
        \centering
        \subfloat[]{\label{main:c}\includegraphics[scale=.20]{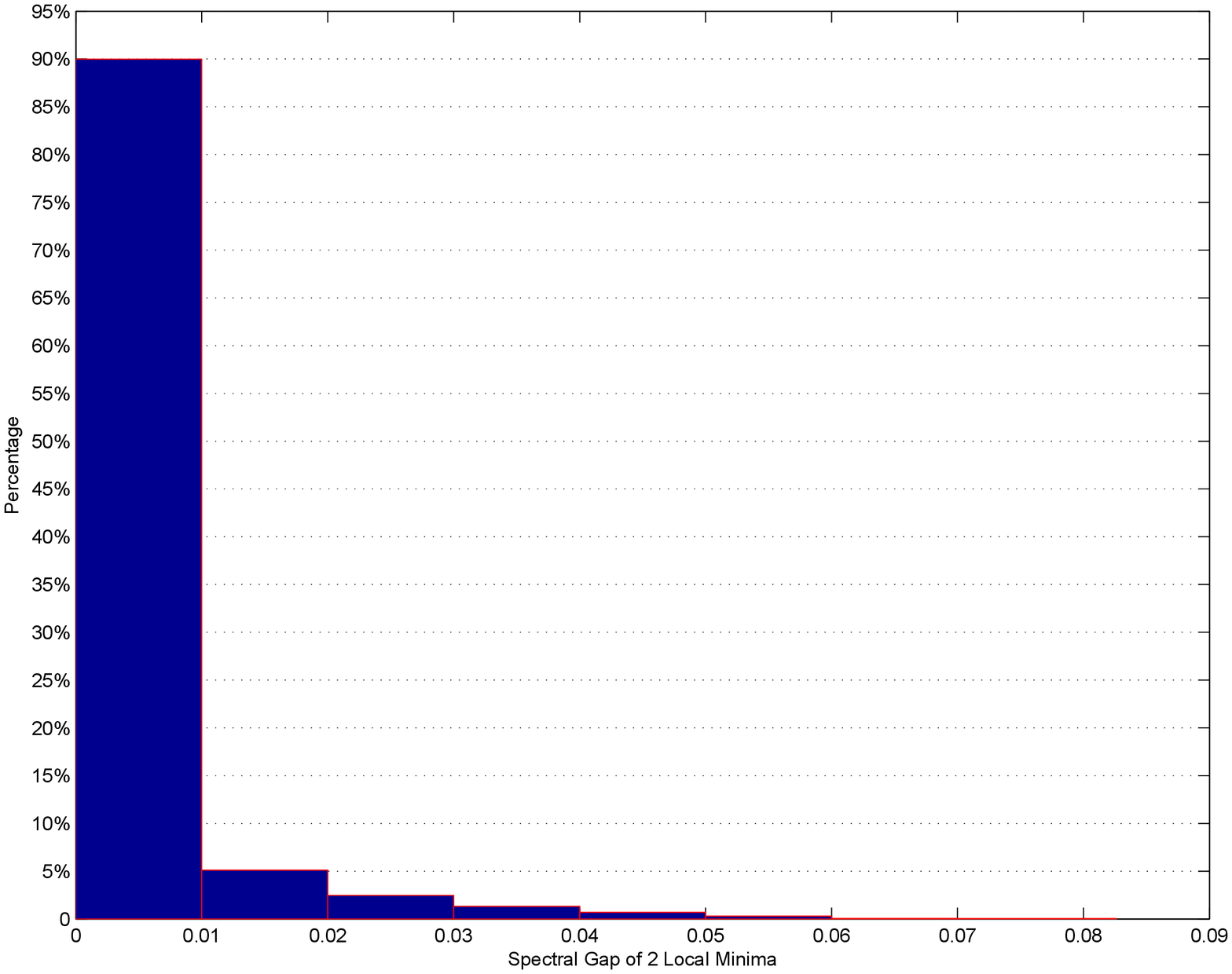}}
    \end{minipage}%
    \begin{minipage}{.5\linewidth}
        \centering
        \subfloat[]{\label{main:d}\includegraphics[scale=.20]{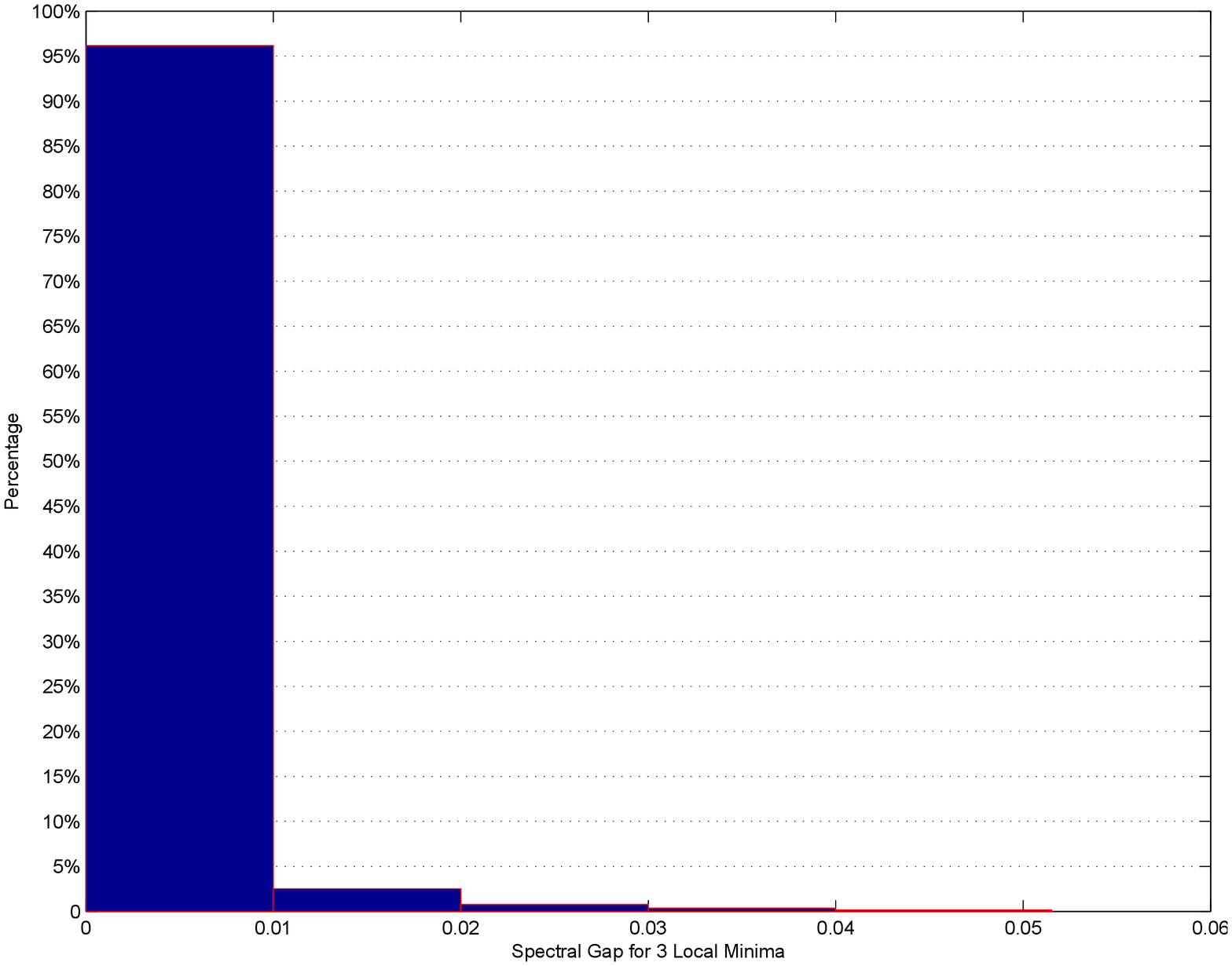}}
    \end{minipage}\par\medskip
%    \centering
%    \subfloat[]{\label{main:e}\includegraphics[scale=.15]{}}

\caption{Spectral gap with (a) 0 (b) 1 (c) 2 (d) 3 local minima}
\label{4}
\end{figure}

%\begin{figure}[!htb]
%  \centering
%  \includegraphics[width=200pt]{C_0LM.eps}\\
%  \caption{Spectral gap with 0 local minimum}\label{4}
%\end{figure}
%
%\begin{figure}[!htb]
%  \centering
%  \includegraphics[width=200pt]{C_1LM.eps}\\
%  \caption{Spectral gap with 1 local minimum}\label{5}
%\end{figure}
%
%\begin{figure}[!htb]
%  \centering
%  \includegraphics[width=200pt]{C_2LM.eps}\\
%  \caption{Spectral gap with 2 local minima}\label{6}
%\end{figure}
%
%\begin{figure}[!htb]
%  \centering
%  \includegraphics[width=200pt]{C_3LM.eps}\\
%  \caption{Spectral gap with 3 local minima}\label{7}
%\end{figure}
%
%\begin{figure}
%  \centering
%  \includegraphics[width=200pt]{C_4LM.eps}\\
%  \caption{Spectral gap with 4 local minima}\label{8}
%\end{figure}

%\input{Conclusion/Conclusion.tex}
\section{Appendix}
\label{sec:Appendix}
\subsection{Proving Lemma \ref{Lem:Gaussian_integral}}
\label{subsec:Gaussian_integral}
%%%%% Insert LEMMA %%%%%%%%
{\bf Lemma \ref{Lem:Gaussian_integral}} (Gaussian Integral)
\textit{ Let $\vb$ and $\xb$ be independent
  Gaussian random vectors with distribution ${\cal N}(\zerob,\Ib_N)$
  each. Then 
\be
\Ec\left\{ e^{\eta\left(\|\vb+a \xb\|^2-\|\vb\|^2\right)} \right\} =
  \left(\frac{1}{1-2a^2\eta(1+2\eta)}\right)^{N/2} \ . 
\ee
%\label{lem:g-i}
}%End of italic

\noindent {\bf Proof:}
In order to determine the expected value we compute the multivariate integral
\begin{subequations}
\begin{align}
& \Ec \left\{e^{\eta\left(\|\vb+a \xb\|^2-\|\vb\|^2\right)} \right\} \\
& = \int{\frac{d\xb d\vb}{(2\pi)^N}
e^{-\frac{1}{2} \ba{cc} \hspace{-1mm} \vb^T, & \hspace{-2mm} \xb^T \hspace{-1mm}\ea 
\ba{cc} \hspace{-1mm} \Ib_N & -2a\eta \Ib_N \hspace{-1mm} \\ \hspace{-1mm} -2a\eta \Ib_N & (1-2a^2\eta)\Ib_N \hspace{-1mm} \ea 
\ba{c} \hspace{-1mm} \vb \hspace{-1mm} \\ \hspace{-1mm} \xb \hspace{-1mm} \ea }} \\
& = \frac{1}{\mbox{det}^{1/2}
\ba{cc} \Ib_N & -2a\eta \Ib_N \\ -2a\eta \Ib_N & (1-2a^2\eta)\Ib_N \ea} \\
& = \frac{1}{\mbox{det}^{N/2}
\ba{cc} 1 & -2a\eta \\ -2a\eta & 1-2a^2\eta \ea} \\
& = \left(\frac{1}{1-2a^2\eta(1+2\eta)}\right)^{N/2} \ .
\end{align}
\end{subequations}
Thus, Lemma \ref{Lem:Gaussian_integral} has hereby been proved.
\qd

\subsection{Proving Lemma \ref{thm:manylocalminima}}
\label{subsec:proofmanylocalminima}
\begin{proof}
Let $N$ be an even integer. Consider a matrix whose first $\frac{N}{2}$ columns ${\hb}_{i}$, $1\leq i \leq \frac{N}{2}$ have unit norms and are orthogonal to each other. For the other  $\frac{N}{2}$ columns ${\hb}_{i}$, $\frac{N}{2} +1\leq i \leq N$, ${\hb}_{i}=-(1+\epsilon){\hb}_{i-\frac{n}{2}}$, where $\epsilon$ is a sufficiently small positive number ($\epsilon<1$). We also let $\yb={{\acute{\Hb}}} (-\mathbf{{1}})$, where $\mathbf{{1}}$ is an all-$1$ vector. So $-\mathbf{{1}}$ is a globally minimum point for this ILS problem.

Consider all those vectors ${\tilde{\xb}}'$ which, for any $1\leq i\leq \frac{N}{2}$, its $i$-th element and  $i+\frac{N}{2}$-th element are either simultaneously $+1$ or simultaneously $-1$. When $\epsilon$ is smaller than $1$, we claim that any such a vector except the all $-1$ vector ${\tilde{\xb}}$, is a local minimum, which shows that there are at least $2^{\frac{N}{2}}-1$ local minima.

 Assume that for a certain $1\leq i\leq \frac{N}{2}$, the $i$-th element and  $(i+\frac{N}{2})$-th element of ${\tilde{\xb}}'$ are simultaneously $-1$. Then if we change the $i$-th element to $+1$, $\|\yb-{{\acute{\Hb}}}{\tilde{\xb}}'\|^2$ increases by $4$; and if we change the $(i+\frac{N}{2})$-th element to $+1$, $\|\yb-{{\acute{\Hb}}}{\tilde{\xb}}'\|^2$ increases by $4(1+\epsilon)^2$. This is true because the $i$-th and $(i+\frac{N}{2})$-th columns are orthogonal to other $(N-2)$ columns.

Similarly, assume that for a certain $1\leq i\leq \frac{N}{2}$, the $i$-th element and  $(i+\frac{N}{2})$-th element of ${\tilde{\xb}}'$ are simultaneously $+1$. Then if we change the $i$-th element to $-1$, $\|\yb-{{\acute{\Hb}}}{\tilde{\xb}}'\|^2$ increases by $4(1+\epsilon)^2-4\epsilon^2$; and if we change the $(i+\frac{N}{2})$-th element to $-1$, $\|\yb-{{\acute{\Hb}}}{\tilde{\xb}}'\|^2$ increases by $4-4\epsilon^2$.
\end{proof}

\subsection{Proving Lemma \ref{thm:22indcolumns}}
\label{subsec:proofthm:22indcolumns}

\begin{proof}
When $\upsib=0$,  clearly ${\tilde{\xb}}=(-1,-1)$ is a global minimum point, not a local minimum point. It is also clear that ${\tilde{\xb}}=(-1,1)$ or ${\tilde{\xb}}=(1,-1)$ can not be a local minimum point since they are neighbors to the global minimum solution. So the only possible local minimum point is ${\tilde{\xb}}=(1,1)$.

From Lemma \ref{lemma:localcondition}, the corresponding necessary and sufficient condition is
\begin{equation*}
{\hb}_{1}^{T}{\hb}_{2} < -\frac{\|{\hb}_{1}\|^2}{2}=-\frac{\|{\hb}_{2}\|^2}{2}=-\frac{1}{2}.
\end{equation*}
This means the angle $\theta$ between the two 2-dimensional vectors ${\hb}_{1}$ and ${\hb}_{2}$ satisfy $\cos(\theta) <-\frac{1}{2}$. Since ${\hb}_{1}$ and ${\hb}_{2}$ are two independent uniform randomly sampled vector, the chance for that to happen is $\frac{\pi-\arccos{(-\frac{1}{2})}}{\pi}=\frac{1}{3}$.
\end{proof}

\subsection{Proving Lemma \ref{thm:22Gaussian}}
\label{subsec:proofthm:22Gaussian}
\begin{proof}
When $\upsib=0$,  clearly ${\tilde{\xb}}=(-1,-1)$ is a global minimum point, not a local minimum point. It is also clear that ${\tilde{\xb}}=(-1,1)$ or ${\tilde{\xb}}=(1,-1)$ can not be a local minimum point since they are neighbors to the global minimum solution. So the only possible local minimum point is ${\tilde{\xb}}=(1,1)$.

From Lemma \ref{lemma:localcondition}, the corresponding necessary and sufficient condition is
\begin{equation*}
{\hb}_{1}^{T}{\hb}_{2} < -\max\left\{ \frac{\|{\hb}_{1}\|^2}{2}, \frac{\|{\hb}_{2}\|^2}{2} \right\}.
\end{equation*}
This means the angle $\theta$ between the two 2-dimensional vectors ${\hb}_{1}$ and ${\hb}_{2}$ satisfy
\begin{equation*}
r_1 r_2 \cos(\theta) < -\frac{\max\left\{r_1^2, r_2^2\right\}}{2},
\end{equation*}
where $r_1$ and $r_2$ are respectively the $\ell_2$ norm of ${\hb}_{1}$ and ${\hb}_{2}$.

 Because the elements of ${\acute{\Hb}}$ are independent Gaussian random variables, $r_1$ and $r_2$ are thus independent random variables following the Rayleigh distribution
\begin{eqnarray*}
p(r_1)=r_1 e^{-\frac{r_1^2}{2}}, p(r_2)=r_2 e^{-\frac{r_2^2}{2}},
\end{eqnarray*}
while $\theta$ follows a uniform distribution over $[0, 2\pi)$

By symmetry, for $t\geq 1$,
\begin{eqnarray*}
&&P(\frac{\max\left\{r_1^2, r_2^2\right\}}{r_1 r_2}>t) \\
&=&2\int_{0}^{\infty} r_1 e^{-\frac{r_1^2}{2}}  \times {\int_{0}^{\frac{r_1}{t}} r_2 e^{-\frac{r_2^2}{2}} \,dr_2}                           \,dr_1\\
&=&2\int_{0}^{\infty} r_1 e^{-\frac{r_1^2}{2}}  \times {(1-e^{-\frac{r_1^2}{2}})} \,dr_1\\
&=&2(1-\int_{0}^{\infty} r_1 e^{-(\frac{1}{2}+\frac{1}{2t^2})r_1^2} \,dr_1)\\
&=&\frac{2}{t^2+1}.
\end{eqnarray*}

Since $\theta$ is an independent random variable satisfying  $\cos(\theta) < -\frac{\max\left\{r_1^2, r_2^2\right\}}{2r_1 r_2}$ and  $\cos(\theta) \geq -1$, the probability that ${\tilde{\xb}}=(+1,+1)$ is a local minimum is given by
\begin{eqnarray*}
&P=& \int_{1}^{2} (1-\frac{2}{t^2+1})' (1-\frac{\arccos(-\frac{t}{2})}{\pi}) \,dt\\
&=& \int_{1}^{2} \frac{4t}{(t^2+1)^2} (1-\frac{\arccos(-\frac{t}{2})}{\pi}) \,dt. \\
&=&\frac{1}{3}-\frac{1}{\sqrt{5}}+\frac{2\arctan(\sqrt{\frac{5}{3}})}{\sqrt{5}\pi},
\end{eqnarray*}
which is approximately $0.145696$.
\end{proof}

\bibliographystyle{IEEEtran}
\bibliography{refs}
\end{document}